\providecommand{\tabularnewline}{\\}
\numberwithin{equation}{section}
\numberwithin{figure}{section}
\theoremstyle{plain}
\newtheorem{thm}{\protect\theoremname}
\theoremstyle{definition}
\newtheorem{defn}[thm]{\protect\definitionname}
\theoremstyle{plain}
\newtheorem{lem}[thm]{\protect\lemmaname}
\theoremstyle{remark}
\newtheorem{rem}[thm]{\protect\remarkname}
\theoremstyle{plain}
\newtheorem{cor}[thm]{\protect\corollaryname}
\pgfplotsset{width=7cm,compat=1.10}
\newtheorem*{assumption*}{\assumptionnumber}
\providecommand{\assumptionnumber}{}
\providecommand{\corollaryname}{Corollary}
\providecommand{\definitionname}{Definition}
\providecommand{\lemmaname}{Lemma}
\providecommand{\remarkname}{Remark}
\providecommand{\theoremname}{Theorem}
\begin{document}
\title{{\Large{}Zero-Knowledge Optimal Monetary Policy }\\
{\Large{}under Stochastic Dominance}}
\author{David Cerezo Sánchez\textsuperscript{}\\
{\small{}david@calctopia.com}}
\maketitle
\begin{abstract}
Optimal simple rules for the monetary policy of the first stochastically
dominant crypto-currency are derived in a Dynamic Stochastic General
Equilibrium (DSGE) model, in order to provide optimal responses to
changes in inflation, output, and other sources of uncertainty.\\

The optimal monetary policy \footnote{\textbf{~STATEMENT ON MONETARY POLICY GOALS AND STRATEGY}:\\
\\
The primary mandate is (stochastic) dominance.\\
The primary means of adjusting the policy stance is through changes
in money growth.\\
The monetary policy is implemented with pre-committed policy rules,
only to be revised in case of technology shocks or in the event of
a financial crisis: the stance of monetary policy will adjust as appropriate
if risks emerge that could impede the attainment of its goals, and
this document will be reviewed and updated with any changes.\\
Unlike other crypto-currencies, this monetary policy synchronises
with macro-economic observables, other fiat currencies and CBDCs:
its primary goal is to follow cooperative equilibria, falling back
to non-cooperative equilibria as last resort.} stochastically dominates all the previous crypto-currencies, thus
the efficient portfolio is to go long on the stochastically dominant
crypto-currency: a strategy-proof arbitrage featuring a higher Omega
ratio with higher expected returns, inducing an investment-efficient
Nash equilibrium over the crypto-market.\\

Zero-knowledge proofs of the monetary policy are committed on the
blockchain: an implementation is provided.\\

\textbf{Keywords}: optimal monetary policy, optimal simple rules,
stochastic dominance, stochastic calculus, DSGE model, strategy-proof,
Nash equilibrium, zero-knowledge, crypto-currency

~

\textbf{JEL classification}: C11, C54, D58, D81, E42, E47, E52, E61,
G11\\
\\
\\
\end{abstract}
\pagebreak{}

\tableofcontents{}

\pagebreak{}

\section{Introduction}

One of the notorious deficiencies of crypto-currencies is their lack
of monetary policy, as currently defined and studied in the field
of macroeconomics: nonetheless, monetary crypto-policymakers must
act in an optimal manner. In this paper, we initiate the study of
optimal monetary policies for crypto-currencies in order to derive
optimal simple rules that stochastically dominate the monetary policy
of other previous crypto-currencies, and ultimately, prove that the
efficient portfolio is to go long on the stochastically dominant crypto-currency.

\subsection{Contributions}

In summary, we make the following contributions:
\begin{itemize}
\item pioneer the introduction of the first optimal monetary policy for
crypto-currencies
\item devise the first stochastically dominant crypto-currency, its dominance
arising from its optimal monetary policy
\item derive optimal simple rules for a crypto-currency in a Dynamic Stochastic
General Equilibria model
\item prove that the efficient portfolio is to go long on stochastically
dominant crypto-currencies: in fact, it's a strategy-proof arbitrage
featuring a higher Omega ratio with a higher expected return, inducing
a Nash equilibrium over the crypto-currency market
\item describe how zero-knowledge proofs for the implemented monetary policy
are committed on the blockchain
\end{itemize}
In a nutshell, we contribute a new methodology for analysing and deriving
optimal simple rules for the monetary policy of stochastically dominant
crypto-currencies, in order to create efficient portfolios of stochastically
dominant crypto-currencies. This paper intends to be a self-contained
guide covering all the necessary theory and practical aspects.

In section \ref{sec:Related-Literature}, \textcolor{black}{we discuss
related literature and prior work. In section \ref{sec:Environment-Framework-Optimal-Portfolio},
we introduce our economic environment, analysis framework, and efficient
portfolio. In section \ref{sec:Model-Policies}, we describe our economic
model and optimal monetary policies. Finally, we detail some features
of the technical implementation in section \ref{sec:Implementation-Details},
including how to commit the implemented zero-knowledge policy, and
then we conclude in section \ref{sec:Conclusion}.}

The reader interested in less theoretic and most empirical analysis
may skip to subsection \textcolor{black}{\ref{subsec:Ranking-Policy-Rules},}

\section{\label{sec:Related-Literature}Related Literature}

The seminal contribution of this paper is to start the study of the
first optimal monetary policy for crypto-currencies: until now, all
the study of the field was concentrated on the monetary policies for
stablecoins \cite{stablecoinSurvey,cryptoeprint:2019:1054} or the
models of the interaction between crypto-currencies, fiat currencies
and/or CBDCs with a view of understanding their shocks to the economy
(starting from the seminal \cite{macroeconomicsCBDC}).

\begin{figure}[H]
\includegraphics[scale=0.15]{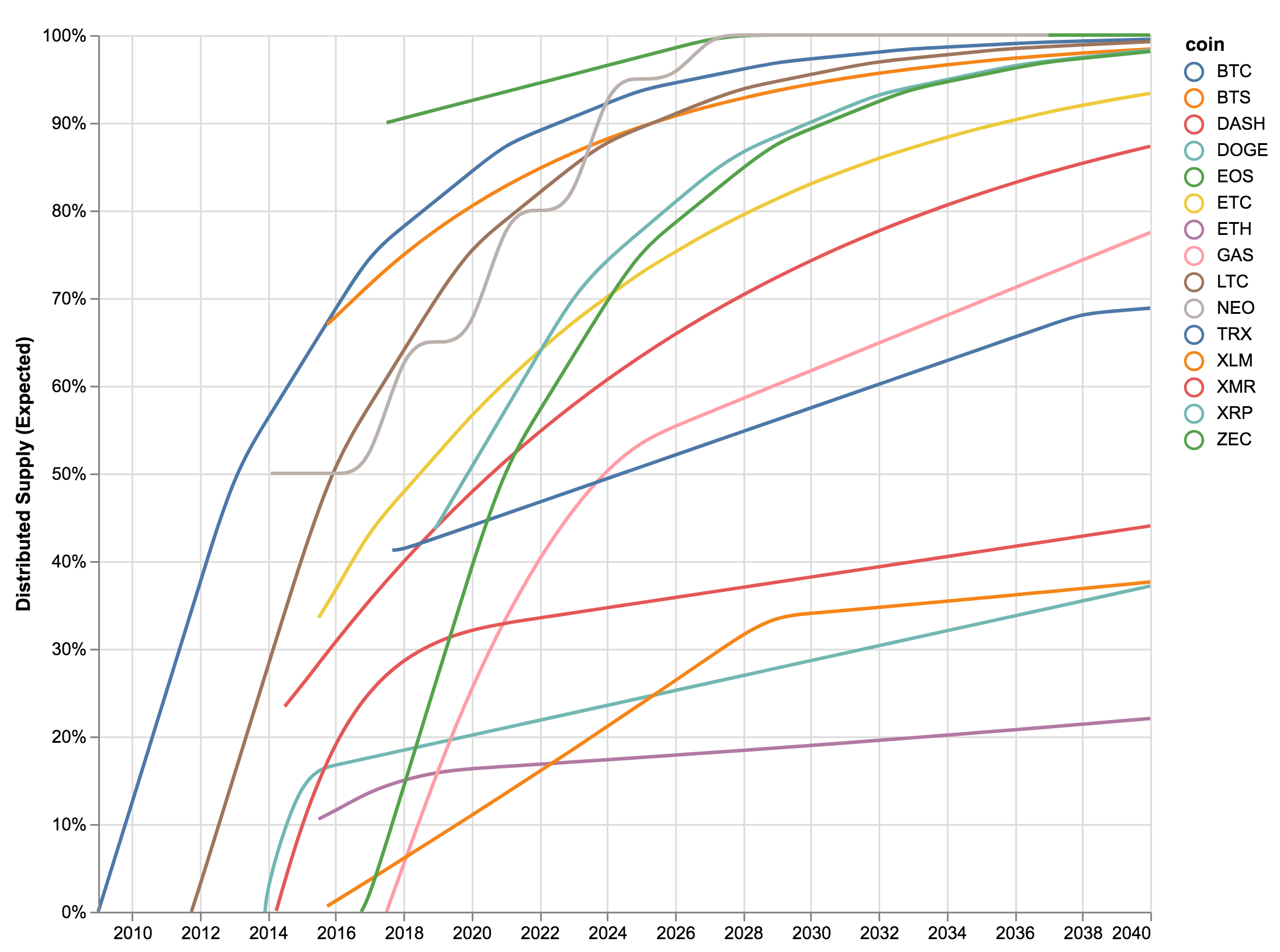}

\caption{\label{fig:Relative-supply-of-cryptocurrencies}Relative supply of
crypto-currencies\cite{cryptoMonetaryBase}}
\end{figure}

\begin{figure}[H]
\begin{centering}
\includegraphics[scale=0.25]{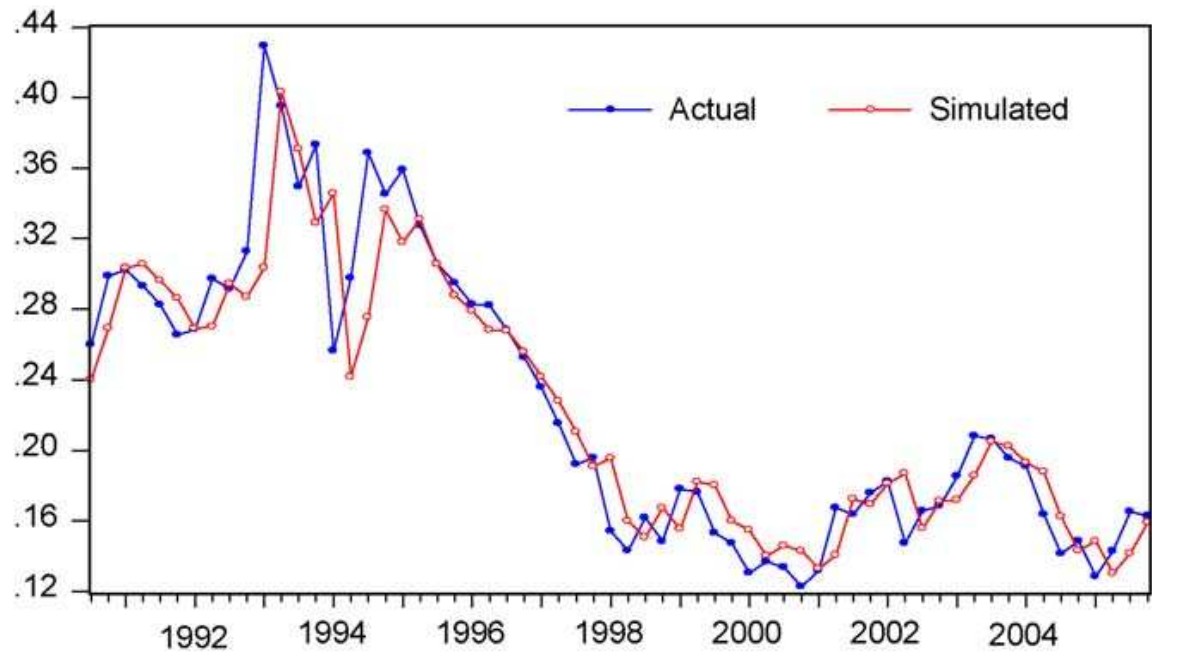}\caption{\label{fig:China's-simulated-quantity}China's simulated quantity
rule and actual M2 growth \cite{chinaPolicyRule}}
\par\end{centering}
\end{figure}

However, the study of the monetary policy of crypto-currencies has
always been relegated to matters related to their supply \cite{cryptoMonetaryBase},
as shown in the previous Figure \ref{fig:Relative-supply-of-cryptocurrencies},
in stark contrast with the quantity rules of money used in the real
world (e.g., China's policy rule as shown in the previous Figure \ref{fig:China's-simulated-quantity}).
In fact, the equational expression of both policy rules couldn't be
more different: Bitcoin's supply equation $S_{t}$ in period $t$
can be given by
\[
S_{t}^{BTC}=21\times10^{7}\times\left(1-\alpha^{t}\right)
\]
where $\alpha$ is the growth rate ($\alpha\approx0.825$ for yearly
periods and $\alpha\approx0.953$ for quarterly periods, see \ref{eq:bitcoinSupplyPeriod}),
and China's quantity rule of money \cite{chinaPolicyRule} for the
previous Figure \ref{fig:China's-simulated-quantity} can be given
by
\[
\Delta M_{t}=\Delta M_{t}^{*}+\theta_{1}\Delta M_{t-1}-\theta_{2}\hat{Y}_{t}-\theta_{3}\left(\pi_{t}-\pi_{t}^{*}\right)
\]
with $\Delta M$ denoting the nominal money growth, $\Delta M_{t}^{*}$
the log of equilibrium money growth, $\theta_{1}=0.88$ the lag of
nominal money growth, $\hat{Y_{t}}$ the output gap, $\theta_{2}=0.16$
the coefficient of response to changes of the output gap, $\pi_{t}$
the inflation rate in period $t$, $\pi_{t}^{*}$ the target inflation
rate, and $\theta_{3}=0.06$ the coefficient of response to changes
in inflation.

There is no related literature about what should be the optimal monetary
policy of a crypto-currency according to the methods of modern macro-economics,
as customarily practised on central banks: in fact, current crypto-currencies
are designed for \textit{anarcho-autarkic} settings on which they
don't have to keep track of inflation, GDP, or money growth, not even
the exchange rate of other crypto-currencies.

Moreover, previous sources of dominance in the crypto-currency market
were the first mover advantage of Bitcoin, or the network effects
inherent to payment networks \cite{cryptoCompetition} : as a novel
contribution, this paper introduces optimal monetary policies as a
source of dominance in the crypto-currency market.

Furthermore, simple rules are preferred over complex models \cite{simpleRulesAI}:
in the foreseeable future, simple rules will still dominate the design
of markets over complex models due to their many strengths and few
weaknesses \cite{taylorSimpleRules}.

\subsection{Comparison with prior work}

Previous work from the same author \cite{cryptoeprint:2019:1054}
described methods to conduct the monetary policy in a decentralised
fashion, but with the following differences:
\begin{enumerate}
\item Previous work \cite{cryptoeprint:2019:1054} focused on a stablecoin,
but this paper targets the volatile crypto-currency market.
\begin{enumerate}
\item However, this paper introduces the novelty of \textit{stochastic dominance}
of monetary policy rules and its usefulness to dominate other previous
crypto-currencies.
\end{enumerate}
\item The technical implementation of this paper is significantly simpler
than \cite{cryptoeprint:2019:1054}, without compromising the security
model: that is, it provides similar cryptographic guarantees on a
decentralised blockchain.
\end{enumerate}
Additionally, the results of this paper are also valid for the setting
of \cite{cryptoeprint:2019:1054} : the optimal simple policy rules
obtained in this paper (\ref{subsec:Ranking-Simple-Policy-Rules}
and \ref{subsec:Ranking-Policy-Rules}) could be directly incorporated
into the ``Economic Model for a Central-Banked Currency'' (Section
4.3 of \cite{cryptoeprint:2019:1054}).

\subsection{Survey of the Monetary Policy Impact on Crypto-currencies}

Although crypto-currencies such as Bitcoin were designed to replace
the discretionary decisions of monetary policymakers from central
banks, even to insulate them from macro-economic shocks, in reality
their decisions continue impacting their price and volatility. In
this subsection, a survey of recent research about this topic is presented,
which shall inform the design of monetary policy rules in the next
subsection (\ref{subsec:Ranking-Policy-Rules}):
\begin{flushleft}
\begin{table}[H]
\begin{raggedright}
\begin{tabular}{|c|c|>{\centering}p{0.65\textwidth}|}
\hline 
Paper & Period & Results\tabularnewline
\hline 
\hline 
\cite{monetaryShocksBitcoin} & 2010-2020 & Unanticipated 1 bp on 2-year Treasury yield is about a 0.25\% decrease
in Bitcoin price and 1.23\% three days later (stronger at high and
low quantiles)\tabularnewline
\hline 
\cite{monetaryPolicyBitcoin} & 2014-2021 & Disinflationary ECB policy shocks (2-year interest rates of 10 basis
points) lead to a persistent decrease in Bitcoin price (-20\%), whereas
inflationary ECB information shocks lead to price increases; conversely,
contractionary US policy shocks (2-year interest rates of 10 basis
point) increase Bitcoin prices (+7\%) but fall during expansionary
US information shocks (due to flows to foreign exchanges with emerging
market currencies)\tabularnewline
\hline 
\cite{qeBurtsBitcoin} & 2016-2021 & Cointegration between Bitcoin prices and M2, deeper with time delays\tabularnewline
\hline 
\cite{bitcoinNotGold} & 2010-2018 & SVAR model shows no response of Bitcoin prices to shocks to nominal
interest rate (1-year US treasury rate), only to stocks, VIX; but
increase after a positive shock to the price level (Billion prices
index)\tabularnewline
\hline 
\cite{cryptoFOMC} & 2013-2017 & Mineable crypto-currencies show US volatility spillovers during FOMC
announcement period, but not dApp or protocols\tabularnewline
\hline 
\cite{cryptoMacroFOMC} & 2010-2018 & Bitcoin price increases 0.26\% at no FOMC announcement, 0.96\% on
the day before and decreases 1\% on the announcement day. Bitcoin
price doesn't change on CPI, PPI or employment rate announcements\tabularnewline
\hline 
\cite{inflationCrypto} & 2010-2021 & Positive link between cryptocurrencies and forward inflation rates
is identified only during COVID-19\tabularnewline
\hline 
\cite{bitcoinInflationHedge} & 2010-2020 & Bitcoin prices appreciate against inflation (or inflation expectation)
shocks, but do not decrease after policy (1-year US treasury rate)
uncertainty shocks (i.e., only when excluding ZLB constraint)\tabularnewline
\hline 
\cite{inflationBitcoinDescriptiveAnalysis} & 2019-2020 & Daily changes in Bitcoin prices Granger cause changes in the forward
inflation rate in a significant and persistent way, but not vice-versa\tabularnewline
\hline 
\cite{exchangeRateBitcoin} & 2010-2014 & In the short term, Bitcoin price adjusts to changes in money supply,
GDP, inflation, and interest rate\tabularnewline
\hline 
\end{tabular}
\par\end{raggedright}
\caption{Survey of Monetary Policy Impact on Crypto-currencies}
\end{table}
\par\end{flushleft}

\begin{flushleft}
~\\
A practical example of the effects of inflation on Bitcoin price can
be found below, shedding \$1K on 13/9/2022 in just 3 minutes (10\%
of market capitalization) as US CPI inflation for August overshoots
at 8.3\% year-on--year (expected 8.1\%):
\par\end{flushleft}

\begin{figure}[H]
\centering{}\includegraphics[scale=0.35]{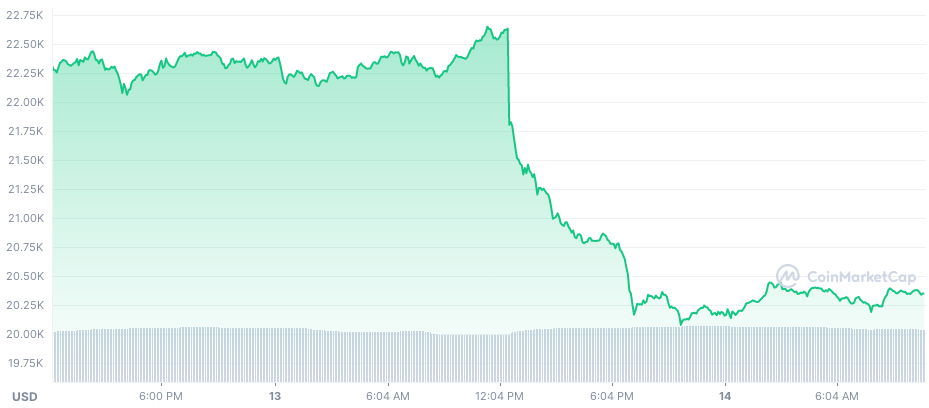}\caption{\label{fig:BTC_inflation_effect}Effect of inflation on BTC/USD}
\end{figure}

\section{\label{sec:Environment-Framework-Optimal-Portfolio}Environment,
Framework and Efficient Portfolio}

We consider pre-commitment rules in rational expectations models by
a Bayesian risk-averse policymaker that is given the task to choose
a policy feedback coefficient function mapping the parameter space
into the set of policy feedback coefficients interpreted as random
variables with probability distributions given from the posterior
distributions of the model parameters, in order to minimise the expected
disutilities of welfare loss for all disutility functions by ranking
the policy rules according to a stochastic dominance criterion that
is robust against all of the parameter uncertainty about the structure
of the economic model.

\subsection{\label{subsec:Economic-environment}Economic Environment}

The setting of this paper is the general form of linear rational expectations
models with uncertainty as set out in \cite{andersonLREM}: many Dynamic
Stochastic General Equilibrium models can be approximated by linear
rational expectation (LRE) equations,
\begin{multline}
\smash[b]{F_{1}\left(\theta_{1},\theta_{2}\right)\mathbb{E}_{t}x_{t+1}+F_{1}\left(\theta_{1},\theta_{2}\right)\mathbb{E}_{t}x_{t+1}+F_{2}\left(\theta_{1},\theta_{2}\right)\mathbb{E}_{t}u_{t+1}}\\
+F_{3}\left(\theta_{1},\theta_{2}\right)x_{t}+F_{4}\left(\theta_{1},\theta_{2}\right)u_{t}+F\left(\theta_{1},\theta_{2}\right)v_{t}=0,\label{eq:dynamicPrivateSector}
\end{multline}

\begin{align}
G_{2}\left(\phi\right)\mathbb{E}_{t}x_{t+1}+G_{3}\left(\phi\right)x_{t}+G\left(\phi\right)v_{t} & =G_{1}\left(\phi\right)u_{t},\label{eq:policyEquation}\\
M_{1}\left(\theta_{m}\right)x_{t}+M_{2}\left(\theta_{m}\right)u_{t}+M\left(\theta_{m}\right)v_{m,t} & =y_{t},t=0,1,2,\ldots\label{eq:measurementEquations}
\end{align}
with the equation \ref{eq:dynamicPrivateSector} describing the dynamics
of the private sector around the deterministic steady-state, \ref{eq:policyEquation}
being the policy equation, and \ref{eq:measurementEquations} the
measurement equations, all the above using the following notation:
\begin{itemize}
\item $x_{t}$ is a vector of $n$ non-policy endogenous variables
\item $u_{t}$ a vector of $k$ policy variables
\item $y_{t}$ a vector of $m\leq n+k$ observable variables
\item $\mathbb{E}_{t}$ is the operator of conditional expectation with
respect to an information set in period $t$ 
\item $F_{i},F,G_{i},G,M_{i},$ and $M$ are matrices depending on the parameters
\item $v_{m,t}$ and $v_{t}$ are vectors of independent and identically
distributed innovations with zero mean and identity variance-covariance
matrix $I$ 
\item $\theta_{1}$ is a vector of structural non-policy random parameters
\item $\theta_{2}$ is a vector of structural non-policy calibrated parameters
\item $\theta_{m}$ is a vector of measurement parameters
\item $\phi$ is a vector of random policy parameters (feedback or response
coefficients)
\end{itemize}
The solution to the system of linear rational expectation equations
\ref{eq:dynamicPrivateSector} - \ref{eq:policyEquation} is given
by a state equation of the form
\begin{equation}
z_{t}=A\left(\theta_{s},\phi\right)z_{t-1}+B\left(\theta_{s},\phi\right)v_{t},\,\,\,\,t=1,2,\ldots,\label{eq:solutionEquations}
\end{equation}
for the initial vector of states variables $z_{0}=\left[x_{0}^{'},u_{0}\right]^{'}$
and for unknown matrices $A\left(\theta_{s},\phi\right)$, $B\left(\theta_{s},\phi\right)$
with $\theta_{s}=\left[\theta_{1}^{'},\theta_{2}^{'}\right]^{'}$.

The parameter uncertainty in model \ref{eq:measurementEquations}
- \ref{eq:solutionEquations} is measured by the posterior probability
distribution function according to the following Bayes rule:
\begin{equation}
p\left(\theta,\phi|Y_{t}\right)=\frac{p\left(\theta,\phi\right)p\left(Y_{t}|\theta,\phi\right)}{p\left(Y_{t}\right)},\label{eq:bayesRule}
\end{equation}
where $\theta=\left[\theta_{s},\theta_{m}^{'}\right]^{'}$, $Y_{t}=\left[y_{1}^{'},y_{2}^{'},\ldots,y_{t}^{'}\right]^{'}$
is a sequence of observable vectors at time $t$, $p\left(Y_{t}|\theta,\phi\right)$
is a likelihood function, and $p\left(\theta,\phi\right)=p\left(\theta\right)p\left(\phi\right)$
is a prior posterior probability distribution function. The elements
of $A\left(\theta_{s},\phi\right)$, and $B\left(\theta_{s},\phi\right)$are
usually non-linear functions of the vectors $\theta_{s}$ and $\phi$,
and the posteriors are not analytically available so we use the likelihood
principle to treat posteriors as a measure of uncertainty about the
parameters; thus, simulations are used to find approximations to the
marginal posterior distributions $\theta$ and $\phi$.

With this approach, optimal policy coefficients are assumed to be
random variables with probability distributions inherited from the
posterior distributions of the structural model parameters observed
with parameter uncertainty, avoiding treating optimal feedback coefficients
as fixed numbers much like policymakers usually do.

\subsection{\label{subsec:Decision-framework}Decision Framework}

In this subsection, we use a decision procedure to evaluate and rank
simple policy rules in rational expectation models. A Bayesian policymaker
formulates a statistical decision problem to choose a policy rule
under parameter uncertainty with the following tuple
\[
\left(Y_{t},p\left(\theta\right),p\left(\phi_{l}\right),\Theta,D,M,L_{t}\right)
\]
in which each term defines:
\begin{itemize}
\item $Y_{t}$ denotes the history of the observable variables over $t$
periods
\item subjective prior distributions $p\left(\theta\right)$ for the structural
parameters $\theta=\left[\theta_{s}^{'},\theta_{m}^{'}\right]^{'}\in\varTheta=\varTheta_{s}\times\varTheta_{m}$ 
\item subjective prior distributions $p\left(\phi_{l}\right)$for the policy
parameters $\phi_{l}\in\varPhi_{l}$ for $l=1,2,\ldots,N$
\item a set $D$ of actions
\item a set $M=\left\{ P_{z|\theta_{s},d}:\theta_{s}\in\varTheta_{s},d\in D\right\} $
of linear rational expectations models under consideration, differing
in the values of the structural parameters $\theta_{s}\in\Theta_{s}$
and the values of the policymaker's action $d\in D$
\item loss function $L_{t}\left(\theta,d\right)$ that quantifies the policymaker's
choice of applying a given policy rule when a particular model holds
\end{itemize}
Considering how to rank a set of policy rules in the model of \ref{subsec:Decision-framework}
with $N\geq2$ different functional forms
\begin{equation}
G_{1l}\left(\phi_{l}\right)u_{l,t}=G_{2l}\left(\phi_{l}\right)\mathbb{E}_{t}x_{t+1}+G_{3l}\left(\phi_{l}\right)x_{t}+G_{l}\left(\phi_{l}\right)v_{t},\,\,\,\,t=0,1,2,\ldots\label{eq:rankingPolicyRules}
\end{equation}
with $l=1,2,\ldots,N$, the vector $\phi_{l}\in\varPhi_{l}$ collecting
the policy feedback coefficients, with $G_{il}$ and $G_{l}$ being
matrices that depend on the policy feedback parameters.

The decision space $D$ is of the form
\begin{align}
D & =\left\{ \left(l,f_{l}\right):l=1,2,\ldots,N\right.\\
 & \left.f_{l}:\varTheta\rightarrow\varPhi_{l}\right\} 
\end{align}
in which the following conditions hold:
\begin{itemize}
\item the admissible policies $d=\left(l,f_{l}\right)$ is a rule from \ref{eq:rankingPolicyRules}
\item $f_{l}:\varTheta\rightarrow\varPhi_{l}$ is a policy feedback coefficient
from a given class of measurable functions $F_{l}$ such that the
system of linear rational expectation equations \ref{eq:dynamicPrivateSector}
and \ref{eq:rankingPolicyRules}, with $\phi_{l}=f_{l}\left(\theta\right)$
for all $\theta\in\varTheta$ has a solution which is given by the
state equation
\begin{equation}
z_{l,t}=A_{l}\left(\theta_{s},\phi_{l}\right)z_{l,t-1}+B_{l}\left(\theta_{s},\phi_{l}\right)v_{t},\,\,\,\,\,t\geq1\label{eq:stateEquationDecisionSpace}
\end{equation}
with $z_{0}$ being an initial state, $z_{l,t}=\left[x_{t}^{'},u_{l,t}^{'}\right]^{'}$,
$z_{l,0}=z_{0}$, and $A_{l}\left(\theta_{s},\phi_{l}\right),B_{l}\left(\theta_{s},\phi_{l}\right)$
are unknown matrices with $\theta_{s}=\left[\theta_{1}^{'},\theta_{2}^{'}\right]^{'}$
\item every set $F_{l}$ includes all constant functions $f_{l}\left(\theta\right)=\text{const}$
\item the parameters space $\varPhi_{l}$ consists of all vectors of policy
parameters $\phi_{l}$ for every $l=1,2,\ldots$ such that for all
$\theta_{s}\in\varTheta_{s}$, the system of linear rational expectation
equations \ref{eq:dynamicPrivateSector} and \ref{eq:rankingPolicyRules}
has a unique solution
\end{itemize}
The procedure of the Bayesian policymaker is to first observe the
history of the observable variables $Y_{t}$ over $t$ periods, and
for every $l=1,2,\ldots,N$ sets the subjective prior distributions
$p\left(\theta\right)$ of the structural parameters and $p\left(\phi_{l}\right)$
of the policy parameters $\phi_{l}\in\varPhi_{l}$. Then, it analyses
the following set of linear rational expectation models,
\[
M=\left\{ P_{z|\theta_{s},d}:\theta_{s}\in\varPhi_{s},d\in D\right\} 
\]
for endogenous non-policy $x_{t}$ described by \ref{eq:dynamicPrivateSector}
and policy variables $u_{l,t}$ described by \ref{eq:rankingPolicyRules}.
The predictive probability distribution $P_{z|\theta_{s},d}$ of the
future state variables $z=\left(z_{l,t+s}\right)_{s=0,1,2,\ldots}\in\mathbb{Z}$
evolves according to \ref{eq:stateEquationDecisionSpace}.

\subsubsection{Welfare Loss Function}

The welfare loss function of the Bayesian decision maker's objective
at time $t$ is defined by
\[
L_{t}:Z_{t}\times V\times\varTheta\times D\rightarrow\left[0,\infty\right)
\]
receiving the following parameters:
\begin{itemize}
\item a vector of current state variables $z_{t}\in Z_{t}$
\item all future shocks $v=\left\{ \left(v_{t+s},v_{m,t+s}\right)\right\} _{s=1,2,\ldots}\in V$ 
\item all vectors of structural parameters $\theta$ from the parameter
space $\varTheta$ 
\item all admissible decisions $d$ from the decision space $D$ 
\end{itemize}
In order to evaluate the objective function, the Bayesian policymaker
could take the unconditional average of the welfare losses over the
current state and all possible future shocks:
\begin{equation}
L_{t}\left(\theta,d\right)=\int_{Z_{t}}\int_{V}L_{t}\left(z_{t},v,\theta,d\right)dP_{v}\left(v\right)dP_{z_{t}}\left(z_{t}\right)
\end{equation}
or the conditional expected value of the welfare loss given $z_{t}$:
\begin{equation}
L_{t}\left(\theta,d|z_{t}\right)=\int_{V}L_{t}\left(z_{t},v,\theta,d\right)dP_{v}\left(v\right)
\end{equation}
with $\theta=\left[\theta_{s}^{'},\theta_{m}^{'}\right]^{'}\in\varTheta$
and the policymaker decision $d\in D$ that is able to modify the
model structure $P_{z|\theta_{s},d}$, the posterior distribution
of the structural parameters $P_{\theta|Y_{t},d}$, and the value
of the expected welfare loss $L_{t}\left(\theta,d\right)$.

In this paper, we will use a quadratic welfare loss function given
by:
\begin{equation}
L_{t}\left(\theta,d\right)=\text{tr}\left(W\sum_{z_{l}}\left(\theta_{s},d\right)\right)=\text{var}_{\theta_{s},d}\left(\hat{\pi}_{t}\right)+w_{y}\text{var}_{\theta_{s},d}\left(\hat{y}_{t}\right)\label{eq:welfareLossFunction}
\end{equation}
for all $d=\left(l,f_{l}\right)\in D$ and $\theta_{s}\in\Theta_{s}$
where $\text{var}_{\theta,d}\left(z_{l,t,i}\right)$ is the unconditional
variance of the state variable $z_{l,t,i}$ in the $l$-th specification
of the DSGE model, while $w_{y}$ is the diagonal weight of $W$ for
the output gap. These diagonal weights reflect the monetary policy
preferences of the central bank over the objectives: specifically,
we set $w_{y}=0.05$ and $w_{\pi}=1$.

\subsubsection{\label{subsec:Ranking-Simple-Policy-Rules}Ranking Simple Policy
Rules}

Now we can start formulating robust optimality criteria to generate
rankings of simple policy rules \ref{eq:rankingPolicyRules} based
on the optimal Bayesian policymaker's objective function $L_{t}$:
first, a fixed vector of structural parameters $\hat{\theta}\in\varTheta$
is chosen from the parameter space $\varTheta$, and then the expected
welfare loss $L_{t}\left(\hat{\theta},d\right)$ is minimised subject
to the recursive state equations \ref{eq:stateEquationDecisionSpace}
under criteria of $k$-degree stochastic dominance (SD$k$) \cite{levyStochasticDominance}.
Stochastic dominance is a useful concept for analysing risky decision-making
under uncertainty when only partial information about the decision
maker's risk preferences is available.
\begin{defn}
(SD$k$ ordering) \label{def:(SDk-ordering)}. Defined by the indefinitely
many inequalities
\begin{equation}
\int_{0}^{L^{*}}u\left(x\right)dF_{L_{1}}\left(x\right)\leq\int_{0}^{L^{*}}u\left(x\right)dF_{L_{2}}\left(x\right)\label{eq:sdk-ordering}
\end{equation}
between the expected disutilities of non-negative valued random losses
$L_{1}\leq_{\text{SD}k}L_{2}$ with the cumulative distribution functions
$L_{1}\sim F_{L_{2}}$ and $L_{2}\sim F_{L_{2}}$ , for all functions
$u\in U_{k}$ with strict inequality for some $u$, where $U_{k}$
is the set of all disutility functions with the $i$-th derivative
of $u$ such that $u'\geq0$, $u''\geq0,\ldots,u^{\left(k\right)}\geq0$.
Note that SD$k$ implies SD$l$ for all $k>l$, and that the SD$\infty$
dominance of $L_{2}$ over $L_{1}$ implies that $L_{1}\leq_{\text{SD}k}L_{2}$
holds for some finite $k$. Additionally, we denote with $\leq_{\text{SD}k}$
the inequality between random welfare losses defined by the $\text{SD}k$
ordering for $k=1,2,\ldots,\infty$. Recall that SD1 ordering assumes
all non-decreasing disutility functions (non-satiable); SD2 is for
risk-averse policymakers towards welfare losses, restricting the disutility
functions to convex and non-decreasing; SD3 additionally prefers negatively
skewed welfare loss distributions (prudence); and SD4 requires that
$u^{4}\leq0$ (temperance).
\end{defn}

Accordingly, simple policy rules can be analysed using the SD$k$
ordering from the previous Definition \ref{def:(SDk-ordering)}.
\begin{defn}
(SD$k$-optimal policy) \label{def:(SDk-optimal-policy).}. Finding
the best SD$k$-optimal simple policy under parameter uncertainty
is solved by searching for the SD$k$-optimal decision $d_{1}^{\text{SD}k}=\left(l_{1}^{\text{SD}k},f_{l_{1}^{\text{SD}k}}\right)\in D$
from the set of all admissible decisions $D$ such that the corresponding
distribution of welfare loss $L_{t}\left(\cdot,d_{1}^{\text{SD}k}\right)$
satisfies
\begin{equation}
L_{t}\left(\cdot,d_{1}^{\text{SD}k}\right)\leq_{\text{SD}k}L_{t}\left(\cdot,d\right)\label{eq:sdk-optimal-policy}
\end{equation}
for all $d\in D$ and subject to \ref{eq:stateEquationDecisionSpace}:
$d_{1}^{\text{SD}k}$ generates the distribution of minimised welfare
loss, the smallest in terms of the SD$k$ ordering.
\end{defn}

Assume that the Bayesian policymaker solves a parameterised optimisation
problem to find the value of the optimal policy feedback coefficient
function $f_{l}^{\text{min}}\left(\theta\right)$:
\begin{equation}
L_{l,t}^{\text{min }}\left(\theta\right)=\underset{\phi_{l}\in\varPhi_{l}}{\text{min}}L_{t}\left(\theta,\left(l,\phi_{l}\right)\right)=L_{t}\left(\theta,\left(l,f^{\text{min }}\left(\theta\right)\right)\right)\label{eq:parametrisedOptimisationProblem}
\end{equation}
for each value of the structural parameters $\theta\in\varTheta$
and for every policy specification $l=\left\{ 1,2,\ldots,N\right\} $
given in \ref{eq:rankingPolicyRules}. Note that $f_{l}^{\text{min}}\left(\theta\right)$
is a selection from the optimal choice correspondence set:

\[
f_{l}^{\text{min}}\left(\theta\right)\in\varPhi_{l}^{\text{min }}\left(\theta\right)=\left\{ \phi_{l}^{\text{min}}\in\varPhi_{l}:L_{l,t}^{\text{min}}\left(\theta\right)=L_{t}\left(\theta,\left(l,\phi_{l}^{\text{min}}\right)\right)\right\} 
\]
We define $\phi_{l}^{\text{min}}=f_{l}^{\text{min}}\left(\theta\right)$
to be the vector of optimal policy feedback coefficient of rule $l$
calculated for the vector of structural parameters $\theta$: thus,
the optimal policy feedback coefficient function $f_{l}^{\text{min}}:\varTheta\rightarrow\varPhi_{l}$
is measurable and the pair $\left(l,f_{l}^{\text{min}}\right)$ belongs
to $D$.

The uncertainty of the structural parameters is considered in order
to find the probability distribution of the optimal policy response
coefficients 

\begin{equation}
\phi_{l}^{\text{min}}\sim p_{\theta}\left(\left(f_{l}^{\text{min}}\right)^{-1}|Y_{t},l\right)\label{eq:optimalPolicyResponseCoefficients}
\end{equation}
and the minimised welfare loss is given by

\begin{equation}
L_{l,t}^{\text{min}}\sim p_{\theta}\left(\left(L_{t}\circ f_{l}^{\text{min}}\right)^{-1}|Y_{t},l\right)\label{eq:minimizedWelfareLoss}
\end{equation}
where the inverse image of $A\in\mathcal{B\left(\varPhi\right)}$
under $\theta\rightarrow f_{l}^{\text{min}}\left(\theta\right)$ is
\[
\left(f_{l}^{\text{min}}\right)^{-1}\left(A\right)=\left\{ \theta\in\varTheta:f_{l}^{\text{min}}\left(\theta\right)\in A\right\} 
\]
and the inverse image of $B\in\mathcal{B\left(\varPhi\right)}$ under
$\theta\rightarrow L_{t}\left(\theta,f_{l}^{\text{min}}\left(\theta\right)\right)$
is
\[
\left(L_{t}\circ f_{l}^{\text{min}}\right)^{-1}\left(B\right)=\left\{ \theta\in\varTheta:L_{t}\left(\theta,f_{l}^{\text{min}}\left(\theta\right)\right)\in B\right\} 
\]

Next theorem \ref{eq:theo-finding-sdk} gives sufficient conditions
for the optimal solution to \ref{eq:sdk-optimal-policy} and shows
how the SD$k$-optimal decision can be found.
\begin{thm}
\label{thm:finding-sdk}Assume that the decision sets $\varPhi_{l}$
for $l=1,2,\ldots,N$ are non-empty compact subsets of $\mathbb{R}^{r}$,
the parameter space $\varTheta$ is an open subset of $\mathbb{R}^{p}$,
and the policymaker's welfare loss $L_{t}$ is a Carathéodory-type
integrable function. If $d^{*}=\left(l^{*},f_{l^{*}}\right)\in D$
is a policymaker decision such that $l^{*}$ is defined by 
\begin{equation}
L_{l^{*},t}^{\text{min}}\leq_{\text{SD}k}L_{l,t}^{\text{min}},\,\,\,\,\forall l\in\left\{ 1,2,\ldots,N\right\} \label{eq:theo-finding-sdk}
\end{equation}
where $L_{l,t}^{\text{min}},l\in\left\{ 1,2,\ldots,N\right\} $ are
random minimised welfare losses as defined in \ref{eq:parametrisedOptimisationProblem},
\ref{eq:optimalPolicyResponseCoefficients} and \ref{eq:minimizedWelfareLoss};
and $f_{l^{*}}=f_{l^{*}}^{\text{min }}$ is the optimal policy feedback
coefficient function that solves \ref{eq:parametrisedOptimisationProblem},
as denoted by
\begin{equation}
\underset{\phi_{l^{*}}\in\varPhi_{l^{*}}}{\text{min}}L_{t}\left(\theta,\left(l^{*},\phi_{l^{*}}\right)\right)=L_{t}\left(\theta,\left(l^{*},f_{l^{*}}\left(\theta\right)\right)\right)=L_{l^{*},t}^{\text{min }}\left(\theta\right),\,\,\,\,\forall\theta\in\varTheta
\end{equation}
then $d^{*}=d_{1}^{\text{SD}k}$ is the SD$k$-optimal decision.
\end{thm}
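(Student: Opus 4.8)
The plan is to verify directly that the two-stage decision $d^{*}=(l^{*},f_{l^{*}}^{\text{min}})$ satisfies the defining inequality of Definition~\ref{def:(SDk-optimal-policy).}, i.e.\ $L_{t}(\cdot,d^{*})\leq_{\text{SD}k}L_{t}(\cdot,d)$ for every admissible $d\in D$. The architecture of the argument is to separate the optimisation into its two stages---an inner pointwise minimisation over $\phi_{l}\in\varPhi_{l}$ for each fixed specification $l$, and an outer SD$k$-comparison across the $N$ rules---and then glue them together using transitivity of the stochastic dominance order. The non-trivial content is the bridge showing that the inner pointwise minimisation already produces an SD$k$-domination.

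First I would record the inner-stage inequality. For each fixed $l$ and $\theta$, the map $\phi_{l}\mapsto L_{t}(\theta,(l,\phi_{l}))$ is continuous on the compact set $\varPhi_{l}\subset\mathbb{R}^{r}$ by the Carathéodory hypothesis (continuity in $\phi_{l}$ for each $\theta$, measurability in $\theta$), so Weierstrass guarantees the minimum in \ref{eq:parametrisedOptimisationProblem} is attained and
\[
L_{l,t}^{\text{min}}(\theta)=L_{t}\bigl(\theta,(l,f_{l}^{\text{min}}(\theta))\bigr)\leq L_{t}\bigl(\theta,(l,\phi_{l})\bigr)\qquad\forall\,\phi_{l}\in\varPhi_{l},\ \forall\,\theta\in\varTheta.
\]
Measurability of the selection $f_{l}^{\text{min}}$, hence $(l,f_{l}^{\text{min}})\in D$, follows from a measurable-selection theorem (Kuratowski--Ryll-Nardzewski), for which the Carathéodory structure supplies exactly the joint measurability required. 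Specialising to an arbitrary $d=(l,f_{l})\in D$ by setting $\phi_{l}=f_{l}(\theta)$ gives the pointwise domination $L_{l,t}^{\text{min}}(\theta)\leq L_{t}(\theta,d)$ for $\theta$ almost everywhere under the posterior.

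The crucial bridge is that this almost-sure pointwise domination upgrades to SD$k$-domination, and here I would invoke the fact built into Definition~\ref{def:(SDk-ordering)} that every disutility $u\in U_{k}$ satisfies $u'\geq0$ and is therefore non-decreasing, for \emph{every} $k\geq1$. Coupling the two losses on the common probability space carrying the posterior of $\theta$, the inequality $L_{l,t}^{\text{min}}(\theta)\leq L_{t}(\theta,d)$ forces $u\bigl(L_{l,t}^{\text{min}}(\theta)\bigr)\leq u\bigl(L_{t}(\theta,d)\bigr)$ pointwise; integrating $\mathbb{E}_{\theta}$ against the posterior and rewriting each side via the pushforward laws of \ref{eq:minimizedWelfareLoss} yields
\[
\int_{0}^{L^{*}}u\,dF_{L_{l,t}^{\text{min}}}\leq\int_{0}^{L^{*}}u\,dF_{L_{t}(\cdot,d)}\qquad\forall\,u\in U_{k},
\]
which is precisely $L_{l,t}^{\text{min}}\leq_{\text{SD}k}L_{t}(\cdot,d)$. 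Thus each inner-minimised rule SD$k$-dominates \emph{every} admissible decision built on the same specification $l$.

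Finally I would close by transitivity. The hypothesis \ref{eq:theo-finding-sdk} gives $L_{l^{*},t}^{\text{min}}\leq_{\text{SD}k}L_{l,t}^{\text{min}}$ for all $l$, while the previous step gives $L_{l,t}^{\text{min}}\leq_{\text{SD}k}L_{t}(\cdot,d)$ for any $d=(l,f_{l})$. Since $\leq_{\text{SD}k}$ is transitive---$\mathbb{E}[u(X)]\leq\mathbb{E}[u(Y)]$ and $\mathbb{E}[u(Y)]\leq\mathbb{E}[u(Z)]$ for all $u\in U_{k}$ force $\mathbb{E}[u(X)]\leq\mathbb{E}[u(Z)]$---chaining the two gives $L_{l^{*},t}^{\text{min}}\leq_{\text{SD}k}L_{t}(\cdot,d)$. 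Because $d^{*}=(l^{*},f_{l^{*}}^{\text{min}})$ realises $L_{t}(\cdot,d^{*})=L_{l^{*},t}^{\text{min}}$ by construction (as random variables in $\theta$, by \ref{eq:parametrisedOptimisationProblem}), this reads $L_{t}(\cdot,d^{*})\leq_{\text{SD}k}L_{t}(\cdot,d)$ for every $d\in D$, i.e.\ $d^{*}=d_{1}^{\text{SD}k}$. I expect the only genuinely delicate point to be the measurable-selection step securing $f_{l}^{\text{min}}\in F_{l}$; the dominance upgrade and the transitive chaining are soft once the non-decreasing property of $U_{k}$ is in hand.
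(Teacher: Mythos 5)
Your proof is correct, but there is no proof in the paper to compare it against: the paper states Theorem \ref{thm:finding-sdk} and then passes directly to the portfolio subsection, so your argument fills a gap rather than paralleling an existing one. The three-step architecture you use --- Weierstrass attainment plus measurable selection for the inner minimisation, the upgrade of pointwise domination $L_{l,t}^{\text{min}}(\theta)\leq L_{t}(\theta,d)$ to SD$k$-domination via the monotonicity ($u'\geq 0$) of every disutility in $U_{k}$, and transitive chaining with hypothesis \ref{eq:theo-finding-sdk} --- is the natural and, as far as one can reconstruct, the intended argument: the paper itself simply asserts in the text preceding the theorem that $f_{l}^{\text{min}}$ is measurable and that $\left(l,f_{l}^{\text{min}}\right)\in D$, which is exactly the selection step you justify via Kuratowski--Ryll-Nardzewski, and the Carath\'eodory hypothesis in the statement exists precisely to license it. One caveat worth flagging: Definition \ref{def:(SDk-ordering)} demands strict inequality for some $u\in U_{k}$, while your chain only ever produces weak inequalities; moreover, since $d$ ranges over all of $D$, it includes $d^{*}$ itself, for which strict dominance is impossible. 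This is a defect of the paper's definition (its $\leq_{\text{SD}k}$ cannot be reflexive as written) rather than of your proof, but a fully rigorous write-up should note that the conclusion holds in the weak ordering, with strictness available only against those $d$ whose loss distribution differs from the minimal one.
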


\subsection{Efficient Portfolio}

In this subsection, we derive an efficient portfolio for stochastically
dominant crypto-currencies providing the best expected returns in
comparison with the other crypto-currencies. Instead of using the
mean-variance framework, we prefer to use marginal conditional stochastic
dominance \cite{mcsd}: all risk-averse investors prefer a portfolio
$A$ over a portfolio $B$ if the portfolio return of $A$ is stochastically
dominant over that of $B$, moving out all dominated assets. Furthermore,
almost marginal conditional stochastic dominance \cite{amcsd} could
be used to prevent extreme utility functions in the set of risk-averse
investors.

\subsubsection{Pricing Stochastic Dominance}

In order to ease exposition, suppose there are only two crypto-currencies
in two separate, segmented markets: a stochastically dominant crypto-currency,
$D$, and Bitcoin, $B$ (resp. any other PoW/PoS crypto-currency).
Consumption in the dominant market at time $t$ is denoted by $c_{t}^{D}$,
and $c_{t}^{B}$ in the Bitcoin denominated market (resp. any other
PoW/PoS crypto-currency). Consumers can transact in one market but
not both simultaneously, that is, utility $u_{j}\left(C_{j,t}\cdot ms_{j,t}\right)$
at time $t$ in market $j\in\left\{ D,B\right\} $, where $C_{j,t}$
denotes complete-market consumption that is distorted by the non-hedgeable
monetary policy shock $ms_{j,t}$, rendering incomplete the system
of markets: moreover, we assume that $ms_{j,t}$ and $C_{j,t}$ are
statistically independent for any $j\in\left\{ D,B\right\} $.
\begin{lem}
Suppose the utility function $u\left(x\right)$ is of the form Constant
Relative Risk-Aversion (CRRA), then the Stochastic Discount Factor
(SDF) in the dominant market is given by
\begin{equation}
M_{t+1}^{D}=\left(\frac{ms_{D,t+1}C_{D,t+1}}{ms_{D,t}C_{D,t}}\right)^{-\gamma}=M_{t+1}^{C_{D}}M_{t+1}^{ms_{D}}
\end{equation}
and in the Bitcoin market (resp. any other PoW/PoS crypto-currency)
is given by 
\begin{equation}
M_{t+1}^{B}=\left(\frac{ms_{B,t+1}C_{B,t+1}}{ms_{B,t}C_{B,t}}\right)^{-\gamma}=M_{t+1}^{C_{B}}M_{t+1}^{ms_{B}}
\end{equation}
where $\gamma$ is the coefficient of relative risk aversion, and
\[
M_{t+1}^{C_{j}}=\left(\frac{C_{j,t+1}}{C_{j,t}}\right)^{-\gamma},M_{t+1}^{ms_{j}}=\left(\frac{ms_{j,t+1}}{ms_{j,t}}\right)^{-\gamma}
\]
\end{lem}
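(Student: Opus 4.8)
The plan is to derive the stochastic discount factor from the representative consumer's intertemporal optimality condition and then exploit the multiplicative structure of CRRA marginal utility. First I would set up the consumer's problem in market $j\in\{D,B\}$: maximise the expected discounted stream of period utilities $u_{j}\left(C_{j,t}\cdot ms_{j,t}\right)$ subject to the budget constraint, where the utility argument is the complete-market consumption $C_{j,t}$ distorted by the non-hedgeable monetary-policy shock $ms_{j,t}$. The first-order (Euler) condition for holding any traded asset with gross return $R_{t+1}$ then reads $\mathbb{E}_{t}\left[M_{t+1}^{j}R_{t+1}\right]=1$, with the stochastic discount factor identified as the intertemporal marginal rate of substitution $M_{t+1}^{j}=\beta\,u_{j}'\left(C_{j,t+1}\,ms_{j,t+1}\right)/u_{j}'\left(C_{j,t}\,ms_{j,t}\right)$, absorbing or normalising to one the subjective discount factor $\beta$ to match the stated form (which carries no $\beta$).

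Next I would substitute the CRRA functional form. Writing $u_{j}(x)=x^{1-\gamma}/(1-\gamma)$ for $\gamma\neq1$ (and $\log x$ for $\gamma=1$), the marginal utility is the power law $u_{j}'(x)=x^{-\gamma}$, so the ratio collapses immediately to
\[
M_{t+1}^{j}=\left(\frac{ms_{j,t+1}\,C_{j,t+1}}{ms_{j,t}\,C_{j,t}}\right)^{-\gamma}.
\]
The decomposition into $M_{t+1}^{C_{j}}M_{t+1}^{ms_{j}}$ is then a one-line consequence of the complete multiplicativity of the power function, $(ab)^{-\gamma}=a^{-\gamma}b^{-\gamma}$, applied with $a=C_{j,t+1}/C_{j,t}$ and $b=ms_{j,t+1}/ms_{j,t}$. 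This yields both displayed identities at once, symmetrically for $j\in\{D,B\}$, since the two markets differ only in their labels.

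The one point requiring care, and where the statistical-independence hypothesis earns its place, is the interpretation of the factorisation as a genuine separation of two pricing kernels rather than a mere algebraic rewrite. I would emphasise that $(ab)^{-\gamma}=a^{-\gamma}b^{-\gamma}$ holds pointwise irrespective of the joint law of $(C_{j,t},ms_{j,t})$, so the decomposition $M_{t+1}^{j}=M_{t+1}^{C_{j}}M_{t+1}^{ms_{j}}$ is unconditional; the assumed independence of $ms_{j,t}$ and $C_{j,t}$ is what additionally guarantees that the two factors are uncorrelated, so that conditional expectations and hence asset prices factor multiplicatively, $\mathbb{E}_{t}\left[M_{t+1}^{C_{j}}M_{t+1}^{ms_{j}}\right]=\mathbb{E}_{t}\left[M_{t+1}^{C_{j}}\right]\mathbb{E}_{t}\left[M_{t+1}^{ms_{j}}\right]$, rendering the consumption and monetary-policy components meaningful as \emph{separate} sources of priced risk. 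The main (modest) obstacle is therefore not the algebra but the bookkeeping of the consumer's problem: I would need to state the budget constraint carefully and verify that the non-hedgeable shock $ms_{j,t}$ enters only multiplicatively inside the utility argument, and not as a separate term in the constraint, so that the Euler equation produces exactly the marginal-rate-of-substitution form above with no residual shock-dependent terms.
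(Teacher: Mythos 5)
Your derivation is correct, and in fact the paper states this lemma without any proof at all; your argument --- identifying the SDF as the intertemporal marginal rate of substitution from the Euler condition, applying the CRRA marginal utility $u'(x)=x^{-\gamma}$ to the distorted consumption argument $C_{j,t}\,ms_{j,t}$, and factoring via $(ab)^{-\gamma}=a^{-\gamma}b^{-\gamma}$ --- is exactly the standard argument the paper implicitly relies on. Your two side remarks are also on point: the paper's statement suppresses the subjective discount factor $\beta$, so the normalisation $\beta=1$ (or its absorption) is genuinely needed to match the displayed form, and the assumed independence of $ms_{j,t}$ and $C_{j,t}$ is irrelevant to the pointwise factorisation itself, earning its keep only afterwards (e.g.\ in the Fundamental Pricing Equation lemma) when conditional expectations of the two kernel components are separated.
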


~
\begin{lem}
\label{prop:(Fundamental-Pricing-Equation)}\textbf{\textup{(Fundamental
Pricing Equation).}} The Euler equation for holders of the stochastically
dominant crypto-currency is given by:
\begin{equation}
E_{t}\left[M_{t+1}^{B}\frac{Q_{t+1}}{Q_{t}}\right]=E_{t}\left[M_{t+1}^{D}\frac{M_{t+1}^{ms_{B}}}{M_{t+1}^{ms_{D}}}\right]=\frac{1}{R_{t+1}^{D}}
\end{equation}
and the Euler equation for the Bitcoin holder (resp. any other PoW/PoS
crypto-currency) is given by:
\begin{equation}
E_{t}\left[M_{t+1}^{D}\frac{Q_{t+1}}{Q_{t}}\right]=E_{t}\left[M_{t+1}^{B}\frac{M_{t+1}^{ms_{D}}}{M_{t+1}^{ms_{B}}}\right]=\frac{1}{R_{t+1}^{B}}
\end{equation}
where $Q_{t}$is the real exchange rate, and $R_{t+1}^{D}$ and $R_{t+1}^{B}$
are the risk-free rate in the dominant and Bitcoin market, respectively.
\end{lem}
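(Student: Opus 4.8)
The plan is to derive each Euler equation from the first-order conditions of the representative consumer in its own segmented market, and then to recast them into the stated cross-market form using the SDF factorisation of the preceding Lemma together with a risk-sharing condition on the hedgeable consumption component.

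First I would write down the intertemporal problem of the representative agent in market $j\in\{D,B\}$: maximise $E_0\sum_{t\ge 0}\beta^t u_j(C_{j,t}\,ms_{j,t})$ (with $\beta$ the subjective discount factor) subject to a budget constraint that, besides the home asset, permits holding the risk-free claim of the other market after converting its payoff through the real exchange rate $Q_t$. Differentiating in the holding of an arbitrary tradeable asset and invoking the CRRA form yields the consumption Euler equation $E_t[M_{t+1}^{j}R_{t+1}]=1$ for every gross return $R_{t+1}$ expressed in market-$j$ consumption units, where $M_{t+1}^{j}$ is exactly the SDF identified in the previous Lemma. Because $ms_{j,t}$ and $C_{j,t}$ are independent, this SDF factors as $M_{t+1}^{j}=M_{t+1}^{C_j}M_{t+1}^{ms_j}$, which I would use repeatedly.

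Next I would specialise the Euler equation to the cross-market risk-free claim. Fixing the convention that $Q_t$ measures Bitcoin-market consumption per unit of dominant-market consumption, a Bitcoin-market agent who buys the dominant risk-free asset earns the converted gross return $R_{t+1}^{D}\,Q_{t+1}/Q_t$ in its own numeraire; since $R_{t+1}^{D}$ is predetermined at $t$ it leaves the conditional expectation, giving $E_t[M_{t+1}^{B}\,Q_{t+1}/Q_t]=1/R_{t+1}^{D}$, and symmetrically, with the exchange rate entering through its reciprocal, $E_t[M_{t+1}^{D}\,Q_t/Q_{t+1}]=1/R_{t+1}^{B}$ for the dominant agent pricing the Bitcoin asset. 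This establishes the outer equalities of the two displays.

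It remains to obtain the middle expressions. The key ingredient is the Backus--Smith risk-sharing condition for the hedgeable component, $Q_{t+1}/Q_t=M_{t+1}^{C_D}/M_{t+1}^{C_B}$, which holds because the $C_{j,t}$ are complete-market consumptions and hence fully insured across the two markets. Substituting this identity and the factorisation $M^{j}=M^{C_j}M^{ms_j}$ into $M_{t+1}^{B}\,Q_{t+1}/Q_t$ cancels the $M^{C}$ terms and leaves $M_{t+1}^{D}\,M_{t+1}^{ms_B}/M_{t+1}^{ms_D}$ pointwise, and the mirror computation handles the Bitcoin holder's equation; taking $E_t$ then yields the middle equalities and completes the chain. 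The main obstacle I anticipate is justifying this risk-sharing identity in a market that is incomplete overall: the incompleteness is carried entirely by the non-hedgeable monetary-policy shocks $ms_{j,t}$, so the argument must carefully separate the spanned consumption component, to which standard complete-markets risk sharing applies, from the orthogonal and independent monetary noise, and verify that the exchange-rate convention is applied consistently in both directions.
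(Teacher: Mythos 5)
You should know at the outset that the paper offers no proof of this lemma at all: it is stated bare, and the later proof of Theorem \ref{thm:dominantHigherPrice} simply invokes it. So your proposal can only be compared with the argument the paper implicitly presupposes rather than one it writes down, and on that score your proposal is correct and is surely the intended derivation. Step (i), the cross-market Euler equation $E_{t}\left[M_{t+1}^{j}R_{t+1}\right]=1$ applied to the other market's risk-free claim converted through the exchange rate, gives the outer equalities (note the paper's SDF lemma implicitly sets $\beta=1$, so your discount factor drops out). Step (ii), the Backus--Smith condition $Q_{t+1}/Q_{t}=M_{t+1}^{C_{D}}/M_{t+1}^{C_{B}}$ on the hedgeable components combined with the factorisation $M_{t+1}^{j}=M_{t+1}^{C_{j}}M_{t+1}^{ms_{j}}$ from the preceding lemma, gives the middle expressions pointwise: $M_{t+1}^{B}\,Q_{t+1}/Q_{t}=M_{t+1}^{C_{B}}M_{t+1}^{ms_{B}}\,M_{t+1}^{C_{D}}/M_{t+1}^{C_{B}}=M_{t+1}^{C_{D}}M_{t+1}^{ms_{B}}=M_{t+1}^{D}\,M_{t+1}^{ms_{B}}/M_{t+1}^{ms_{D}}$, with the mirror computation for the other direction. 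That is exactly the chain the lemma asserts.

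Two remarks. First, your derivation uncovers a genuine inconsistency in the lemma as printed: under the single convention you fix ($Q_{t}$ equal to Bitcoin-market units per dominant-market unit), the Bitcoin holder's equation must read $E_{t}\left[M_{t+1}^{D}\,Q_{t}/Q_{t+1}\right]=1/R_{t+1}^{B}$, as you derive, not $E_{t}\left[M_{t+1}^{D}\,Q_{t+1}/Q_{t}\right]$ as the paper writes; the acid test is the middle expression itself, since $M_{t+1}^{D}\,Q_{t}/Q_{t+1}=M_{t+1}^{B}\,M_{t+1}^{ms_{D}}/M_{t+1}^{ms_{B}}$ holds pointwise while the printed variant does not --- the paper is silently flipping the definition of $Q$ between its two displays, and your version is the internally consistent one. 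Second, the obstacle you flag at the end is the one real gap, and it cannot be closed from what the paper provides: the risk-sharing identity on the hedgeable component is a complete-markets condition that must simply be imposed as an assumption (the paper never states it), and it sits uneasily with the stated market segmentation --- indeed, combining your middle expression with the own-market Euler equation $E_{t}\left[M_{t+1}^{D}\right]=1/R_{t+1}^{D}$ forces $E_{t}\left[M_{t+1}^{C_{D}}M_{t+1}^{ms_{B}}\right]=E_{t}\left[M_{t+1}^{C_{D}}M_{t+1}^{ms_{D}}\right]$, a nontrivial restriction tying together the monetary shocks of the two markets that the paper never acknowledges. Your proof is as complete as the model permits; flagging rather than resolving that assumption is the right call.
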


We assume that the logarithm of the Stochastic Discount Factors (SDFs)
in the two markets are normally distributed: $m^{D}$, $m^{B}$, $m^{ms_{D}}$,
$m^{ms_{B}}$, $m^{C_{D}}$ and $m^{C_{B}}$ (i.e., we denote logarithms
of capitalised variables with their lowercase variant).
\begin{lem}
\label{prop:arbitrage-free-return}The arbitrage-free expected return
on the stochastically-dominant crypto-currency is given by:
\begin{equation}
E_{t}\left(\Delta q_{t+1}\right)=r_{t}^{B}-r_{t}^{D}+\frac{1}{2}\left[\text{Var}_{t}\left(m_{t+1}^{B}\right)-\text{Var}_{t}\left(m_{t+1}^{D}\right)\right]+E_{t}m_{t+1}^{ms_{B}}-E_{t}m_{t+1}^{ms_{D}}
\end{equation}
thus a relative rise of $E_{t}m_{t+1}^{ms_{B}}$ over $E_{t}m_{t+1}^{ms_{D}}$
leads to the appreciation of the stochastically dominant crypto-currency.
\end{lem}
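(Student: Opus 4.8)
The plan is to turn the two multiplicative Euler conditions of Lemma~\ref{prop:(Fundamental-Pricing-Equation)} into one additive (log-linear) equation, exploiting the joint log-normality of the log-SDFs postulated just before the statement, and then to match first and second moments to read off $E_t(\Delta q_{t+1})$. The crucial by-product of the fundamental pricing equation is that its first equality already identifies the real exchange rate state-by-state: since $M_{t+1}^{B}=M_{t+1}^{C_{B}}M_{t+1}^{ms_{B}}$, the relation $E_t[M_{t+1}^{B}\,Q_{t+1}/Q_{t}]=E_t[M_{t+1}^{C_{D}}M_{t+1}^{ms_{B}}]$ holds exactly when $Q_{t+1}/Q_{t}=M_{t+1}^{C_{D}}/M_{t+1}^{C_{B}}$, i.e. in logs $\Delta q_{t+1}=m_{t+1}^{C_{D}}-m_{t+1}^{C_{B}}$. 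This is the complete-markets restriction on the undistorted consumption streams, and taking $E_t$ of it is the quantity I ultimately evaluate.

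First I would record the two risk-free rates from the locally risk-free bond Euler equation $1/R_{t+1}^{j}=E_t[M_{t+1}^{j}]$, $j\in\{D,B\}$. Applying the Gaussian identity $\ln E_t[e^{X}]=E_t[X]+\tfrac12\text{Var}_t(X)$ to $X=m_{t+1}^{j}$ gives $r_t^{j}=-E_t[m_{t+1}^{j}]-\tfrac12\text{Var}_t(m_{t+1}^{j})$. Substituting the additive log-factorisation $m_{t+1}^{j}=m_{t+1}^{C_{j}}+m_{t+1}^{ms_{j}}$ (the logarithm of the decomposition in the first lemma) and using linearity of the conditional expectation yields $E_t[m_{t+1}^{C_{j}}]=-r_t^{j}-E_t[m_{t+1}^{ms_{j}}]-\tfrac12\text{Var}_t(m_{t+1}^{j})$, which is the expression I will feed into the exchange-rate identity.

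Then I would combine the pieces. Taking $E_t$ of the exchange-rate identity and inserting the $j=D$ and $j=B$ instances of the previous display gives $E_t(\Delta q_{t+1})=E_t[m_{t+1}^{C_{D}}]-E_t[m_{t+1}^{C_{B}}]=r_t^{B}-r_t^{D}+\tfrac12[\text{Var}_t(m_{t+1}^{B})-\text{Var}_t(m_{t+1}^{D})]+E_t m_{t+1}^{ms_{B}}-E_t m_{t+1}^{ms_{D}}$, which is precisely the asserted formula: the $E_t[m_{t+1}^{ms_{j}}]$ level terms from the SDF decomposition survive, while the consumption means are absorbed into $r_t^{j}$ together with the full variances $\text{Var}_t(m_{t+1}^{j})$. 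For the comparative static, with $q$ oriented so that a larger $q$ is an appreciation of the dominant market $D$, the formula gives $\partial E_t(\Delta q_{t+1})/\partial(E_t m_{t+1}^{ms_{B}})>0$ and $\partial E_t(\Delta q_{t+1})/\partial(E_t m_{t+1}^{ms_{D}})<0$, so a relative rise of $E_t m_{t+1}^{ms_{B}}$ over $E_t m_{t+1}^{ms_{D}}$ appreciates $D$, as claimed.

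I expect the main obstacle to be justifying the step that upgrades the two \emph{expectational} Euler equations to the \emph{state-by-state} exchange-rate identity $\Delta q_{t+1}=m_{t+1}^{C_{D}}-m_{t+1}^{C_{B}}$: this is where the assumed independence of the non-hedgeable monetary shock $ms_{j,t}$ from complete-market consumption $C_{j,t}$ does the real work, ensuring the non-hedgeable factors cancel out of the real exchange rate and leaving an identity free of the $\text{Cov}_t(m_{t+1}^{B},\Delta q_{t+1})$ and $\tfrac12\text{Var}_t(\Delta q_{t+1})$ terms that a naive log-linearisation of a single Euler equation would generate. A secondary source of care is the risk-free-rate convention: the formula comes out correctly only with the own-market bond price $1/R_{t+1}^{j}=E_t[M_{t+1}^{j}]$, so I would confirm this is the object denoted $R_t^{j}$, and I would track the timing ($r_t^{j}$ set at $t$ for $t+1$) and the orientation of $Q_t$ so that the final signs agree with the stated appreciation conclusion.
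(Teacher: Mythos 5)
The paper states this lemma with no proof at all, so there is no in-paper argument to compare against; judged on its own, your derivation is correct and is evidently the intended one. I verified the algebra: from $\Delta q_{t+1}=m_{t+1}^{C_{D}}-m_{t+1}^{C_{B}}$ and $r_{t}^{j}=-E_{t}\left[m_{t+1}^{j}\right]-\tfrac{1}{2}\text{Var}_{t}\left(m_{t+1}^{j}\right)$ with $m_{t+1}^{j}=m_{t+1}^{C_{j}}+m_{t+1}^{ms_{j}}$, the stated formula and the comparative static follow immediately. Two remarks on the caveats you flagged, both of which resolve in your favour. First, ``holds exactly when'' overstates matters: the expectational Euler equations are implied by, but do not imply, the state-by-state identity $Q_{t+1}/Q_{t}=M_{t+1}^{C_{D}}/M_{t+1}^{C_{B}}$; that identity is the primitive market-structure assumption (complete markets in the $C$-components, with $ms$ non-hedgeable), and the middle equalities of Lemma \ref{prop:(Fundamental-Pricing-Equation)} --- e.g. $E_{t}\left[M_{t+1}^{B}Q_{t+1}/Q_{t}\right]=E_{t}\left[M_{t+1}^{C_{D}}M_{t+1}^{ms_{B}}\right]$ --- are obtained precisely by substituting it, which is internal evidence that the paper presupposes it; note also that the independence of $ms_{j,t}$ and $C_{j,t}$ only lets conditional expectations factor, it does not by itself deliver the identity, so attributing the ``real work'' to independence is a slight misplacement. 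Second, your worry about the rate convention is exactly right and is the one genuine tension with the paper's text: Lemma \ref{prop:(Fundamental-Pricing-Equation)} literally defines $1/R_{t+1}^{D}=E_{t}\left[M_{t+1}^{C_{D}}M_{t+1}^{ms_{B}}\right]$, and under that definition the $ms$ means enter the final formula with the opposite sign and the variance terms become the cross-market objects $\text{Var}_{t}\left(m_{t+1}^{C_{D}}+m_{t+1}^{ms_{B}}\right)$ rather than $\text{Var}_{t}\left(m_{t+1}^{D}\right)$; the lemma as stated is true only under the own-market convention $1/R_{t+1}^{j}=E_{t}\left[M_{t+1}^{j}\right]$ that you adopt, so your derivation is the unique consistent reading rather than merely one option.
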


\begin{defn}
\label{def:Logarithmic-utility-function}(Logarithmic utility function).
The utility function is
\[
u\left(x\right)=\log\left(x\right)
\]
thus we have the following additively separable representation for
the two shocks, consumption and monetary (resp. $cs$ and $ms$):
\[
u\left(cs\cdot ms\right)=\log\left(cs\cdot ms\right)=u_{1}\left(cs\right)\cdot u_{2}\left(ms\right)=\log\left(cs\right)+\log\left(ms\right)
\]
with values for $D_{1}$ and $B_{1}$ for $u_{1}$ and $D_{2}$ and
$B_{2}$ for $u_{2}$: $D_{1}$ and $D_{2}$are marginals of the joint
probability distribution $D\left(cs,ms\right)$ in the dominant market,
while $B_{1}$ and $B_{2}$ are marginals of the joint probability
distribution $B\left(cs,ms\right)$ in the Bitcoin market (resp. any
other PoW/PoS crypto-currency).
\end{defn}

\begin{lem}
\label{prop:FSD-dominant}\textbf{\textup{(First-order Stochastic
Dominance).}} A necessary and sufficient condition for the first-order
stochastic dominance of the stochastically dominant crypto-currency
over Bitcoin (resp. any other PoW/PoS crypto-currency) is 
\[
B_{1}\left(cs\right)\geq D_{1}\left(cs\right)
\]
and
\[
B_{2}\left(cs\right)\geq D_{2}\left(cs\right)
\]
with strong inequality for at least some values in $cs$ or in $ms$.
\end{lem}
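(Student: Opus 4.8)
The plan is to read ``first-order stochastic dominance of the dominant crypto-currency over Bitcoin'' as multivariate FSD of the joint law $D(cs,ms)$ over the joint law $B(cs,ms)$, that is, $\mathbb{E}_{D}[\phi]\geq\mathbb{E}_{B}[\phi]$ for every coordinatewise non-decreasing $\phi:\mathbb{R}_{+}^{2}\rightarrow\mathbb{R}$, and then to show that this joint dominance is equivalent to the two marginal conditions. The engine is the classical univariate characterisation: a random variable $X$ FSD-dominates $Y$ iff $F_{X}(t)\leq F_{Y}(t)$ for all $t$, equivalently iff $\mathbb{E}[\psi(X)]\geq\mathbb{E}[\psi(Y)]$ for every non-decreasing $\psi$. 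The crucial structural input is that $cs$ and $ms$ are statistically independent in each market (assumed just before Lemma~\ref{prop:FSD-dominant}), so each joint law factors as the product of its marginals, while the additively separable representation $u(cs\cdot ms)=u_{1}(cs)+u_{2}(ms)=\log(cs)+\log(ms)$ from Definition~\ref{def:Logarithmic-utility-function} makes the investor's welfare a coordinatewise non-decreasing function of $(cs,ms)$.

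For \emph{sufficiency} I would assume $B_{1}(cs)\geq D_{1}(cs)$ and $B_{2}(ms)\geq D_{2}(ms)$ pointwise, which is exactly $cs_{D}\succeq_{\mathrm{FSD}}cs_{B}$ and $ms_{D}\succeq_{\mathrm{FSD}}ms_{B}$ by the CDF characterisation. Using the quantile (Strassen) coupling I would realise $cs_{D}\geq cs_{B}$ and $ms_{D}\geq ms_{B}$ almost surely on a common probability space, and then invoke independence to pass to the product coupling, obtaining $(cs_{D},ms_{D})\geq(cs_{B},ms_{B})$ coordinatewise a.s. For any coordinatewise non-decreasing $\phi$ this yields $\phi(cs_{D},ms_{D})\geq\phi(cs_{B},ms_{B})$ a.s., hence $\mathbb{E}_{D}[\phi]\geq\mathbb{E}_{B}[\phi]$; specialising to $\phi(cs,ms)=u_{1}(cs)+u_{2}(ms)$ recovers the welfare comparison, and if at least one marginal inequality is strict on a set of positive measure the preference is strict, establishing genuine FSD.

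For \emph{necessity} I would run the test-function argument in the opposite direction. Given the joint FSD, I feed in the admissible (non-decreasing) indicators $\phi(cs,ms)=\mathbf{1}\{cs>t\}$ and $\phi(cs,ms)=\mathbf{1}\{ms>s\}$: the first gives $1-D_{1}(t)\geq 1-B_{1}(t)$, i.e.\ $B_{1}(t)\geq D_{1}(t)$ for all $t$, and the second gives $B_{2}(s)\geq D_{2}(s)$ for all $s$. The strict clause then follows by tracking where strictness can live, since strict dominance for some utility forces at least one of the two marginal CDF inequalities to be strict on a positive-measure set, which is precisely ``strong inequality for at least some values in $cs$ or in $ms$.''

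The main obstacle, and the place where the independence hypothesis does all the work, is the necessity half together with the product-coupling step in sufficiency: for general (dependent) $(cs,ms)$, FSD of the scalar welfare $u_{1}(cs)+u_{2}(ms)$ need not factor into FSD of the two marginals, because a deficit in one coordinate can be masked by a surplus in the other. I would therefore be careful to phrase the dominance as a joint (multivariate) FSD rather than as dominance of the single variable $\log(cs\cdot ms)$ alone, and to verify that the indicator test functions used for necessity are genuinely coordinatewise non-decreasing and hence admissible in the FSD class. The remaining bookkeeping---measurability of the induced marginals and the exact location of the strict inequality---is routine.
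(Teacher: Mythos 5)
Your proof is correct, but it takes a genuinely different route from the paper. The paper never argues via a coupling or via the full multivariate class of coordinatewise non-decreasing test functions; instead it works inside the additively separable utility class fixed in Definition \ref{def:Logarithmic-utility-function}, $u(cs\cdot ms)=u_{1}(cs)+u_{2}(ms)$, and uses the Levy--Paroush style decomposition displayed immediately after the lemma,
\[
E_{D}u-E_{B}u=\int\left(B_{1}(t)-D_{1}(t)\right)du_{1}(t)+\int\left(B_{2}(s)-D_{2}(s)\right)du_{2}(s),
\]
obtained by integration by parts coordinate by coordinate. The key point of contrast is that this decomposition needs only linearity of expectation --- the expected value of a sum $u_{1}(cs)+u_{2}(ms)$ depends on the two marginals alone --- so the paper's equivalence between the marginal CDF inequalities and dominance within the separable class holds with no independence hypothesis and no Strassen/product coupling. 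Your argument proves a strictly stronger sufficiency conclusion (dominance against \emph{all} coordinatewise non-decreasing $\phi$, i.e.\ the genuine multivariate usual stochastic order), and you correctly identify that this stronger statement is exactly where the independence of $ms_{j,t}$ and $C_{j,t}$, assumed just before the lemma, does real work: without independence, marginal FSD does not imply joint FSD, whereas the paper's separable reading sidesteps the issue entirely. Your necessity direction via the indicators $\mathbf{1}\{cs>t\}$ and $\mathbf{1}\{ms>s\}$ is the same in substance as taking $u_{1}$ a step function with $u_{2}$ constant (and vice versa) inside the separable class, so the two proofs agree there. Two remarks: first, your caution about not reading the dominance as scalar FSD of $\log(cs\cdot ms)$ is well placed, since for that reading necessity of the marginal conditions would genuinely fail; second, you silently corrected the paper's typo $B_{2}(cs)\geq D_{2}(cs)$ to $B_{2}(ms)\geq D_{2}(ms)$, which is the intended statement. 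On balance, your version buys a stronger order-theoretic conclusion at the price of the independence assumption, while the paper's version is more elementary, dependence-free, and already sufficient for its downstream use (e.g.\ Theorem \ref{thm:dominantHigherPrice} only ever integrates functions of $ms$ alone against $B_{2}$ and $D_{2}$).
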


Given the definitions \ref{def:Logarithmic-utility-function}, the
stochastically dominant crypto-currency in the dominant market with
a joint probability distribution $D\left(cs,ms\right)$, is preferred
to the Bitcoin market with $B\left(cs,ms\right)$ if:
\begin{align}
 & E_{D}u\left(cs,ms\right)-E_{B}u\left(cs,ms\right)\\
 & =\int\left(B_{1}\left(t\right)-D_{1}\left(t\right)\right)du_{1}\left(t\right)+\int\left(B_{2}\left(s\right)-D_{2}\left(s\right)\right)du_{2}\left(s\right)\\
 & =\int\left(B_{1}\left(t\right)-D_{1}\left(t\right)\right)\frac{1}{t}du\left(t\right)+\int\left(B_{2}\left(s\right)-D_{2}\left(s\right)\right)\frac{1}{s}du\left(s\right)\geq0
\end{align}

\begin{lem}
\label{prop:SSD-dominant}\textbf{\textup{(Second-order Stochastic
Dominance).}} A necessary and sufficient condition for the second-order
stochastic dominance of the stochastically dominant crypto-currency
over Bitcoin (resp. any other PoW/PoS crypto-currency)
\[
\int_{t}^{\infty}\left(B_{1}\left(s\right)-D_{1}\left(s\right)\right)ds\geq0
\]
and
\[
\int_{t}^{\infty}\left(B_{2}\left(s\right)-D_{2}\left(s\right)\right)ds\geq0
\]
for all $s>t$ with at least one strict inequality.
\end{lem}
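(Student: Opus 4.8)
The plan is to reduce the two-shock statement to two decoupled one-dimensional second-order dominance conditions and then to establish each by a double integration by parts tested against the class of increasing concave utilities.

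First I would exploit the additively separable structure of Definition \ref{def:Logarithmic-utility-function} together with the assumed statistical independence of $cs$ and $ms$. Since $u(cs\cdot ms)=u_1(cs)+u_2(ms)$, the expected-utility gap already computed in the display preceding the statement splits cleanly as
\[
E_D u - E_B u = \int (B_1(t)-D_1(t))\,u_1'(t)\,dt + \int (B_2(s)-D_2(s))\,u_2'(s)\,ds,
\]
so that the dominance relation — the gap being nonnegative for \emph{every} admissible increasing concave pair $(u_1,u_2)$ — holds iff each of the two integrals is nonnegative for every increasing concave $u_1$ (respectively $u_2$) on its own. This decouples the problem into two identical scalar subproblems, and I would carry out only the $cs$-term, the $ms$-term being verbatim.

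For sufficiency, fixing an increasing concave $u_1$, I would introduce the integrated difference $\Phi_1(t):=\int_t^{\infty}(B_1(r)-D_1(r))\,dr$, so that $\Phi_1'=-(B_1-D_1)$, and integrate $\int (B_1-D_1)u_1'\,dt$ by parts. This produces a boundary term plus $\int \Phi_1(t)\,u_1''(t)\,dt$; because the admissible functions satisfy $u_1'\geq 0$ and $u_1''\leq 0$, the sign of the remaining integral is controlled entirely by the sign of $\Phi_1$, and the hypothesis $\Phi_1(t)\geq 0$ for all $t$ then forces the gap to have the required sign. For necessity I would run the converse by a test-function argument: assuming the gap is nonnegative for all admissible $u_1$, the one-parameter family of kinked functions $u_1^{(c)}(x)=\min(x,c)$ — whose second derivative is a negative unit mass concentrated at $c$ — isolates $\Phi_1(c)\geq 0$ at each threshold $c$, and ranging over $c$ recovers the stated pointwise inequality $\int_t^{\infty}(B_1-D_1)\,ds\geq 0$. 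The clause requiring \emph{at least one strict inequality} then matches strict positivity of the gap for some utility in the class, i.e. non-degeneracy; reassembling the $cs$- and $ms$-conditions yields both displayed inequalities.

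The main obstacle I anticipate is twofold. First, rigorously justifying that the boundary terms $\Phi_1(t)\,u_1'(t)$ vanish at both endpoints of the half-line $(0,\infty)$ — the natural domain of $\log$ — for the full admissible class rather than merely for the logarithmic utility, which requires pinning down precisely which integrability and finite-moment hypotheses on the marginals $D_i,B_i$ are in force (these are of the kind already implicit in the normality and finite-variance assumptions underlying Lemma \ref{prop:arbitrage-free-return}). Second, reconciling the orientation of the integrated difference: integration by parts delivers $\Phi_1$ with a definite sign convention, and one must verify that the $\int_t^{\infty}$ form appearing in the statement is the one consistent with the concavity of the relevant utility class and with the return-versus-loss orientation fixed earlier, rather than the complementary $\int_{-\infty}^{t}$ form. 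Ensuring that the kinked test family genuinely lies inside the admissible class — smoothing the Dirac second derivative by nearby smooth concave functions if differentiability is demanded — is the remaining delicate but routine point.
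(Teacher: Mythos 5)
The paper never proves this lemma: it is stated bare, as a quoted bivariate analogue of the classical Hadar--Russell/Hanoch--Levy characterisation (cf.\ \cite{levyStochasticDominance}), so there is no internal argument to compare yours against and your proposal must stand on its own.

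Your skeleton --- decoupling the two shocks via additive separability, integration by parts for sufficiency, kinked test functions for necessity --- is the standard route and does prove \emph{a} correct theorem. The genuine gap is the issue you set aside as a ``routine'' reconciliation of orientation: it is not routine, it is the crux, and as written your two main steps are inconsistent with the statement and with each other. With $\Phi_1(t)=\int_t^{\infty}\left(B_1-D_1\right)(r)\,dr$ and an increasing \emph{concave} $u_1$, integration by parts gives
\[
\int\left(B_1-D_1\right)u_1'\,dt=\Phi_1(0)\,u_1'(0^{+})+\int\Phi_1(t)\,u_1''(t)\,dt,
\]
and since $u_1''\le 0$, the hypothesis $\Phi_1\ge 0$ makes the second term \emph{non-positive}: the sign of the gap is not controlled in your favour, so sufficiency fails for this utility class with this tail. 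Symmetrically, your test family $\min(x,c)$ satisfies $E_F\left[\min(X,c)\right]=c-\int_0^{c}F$, so it isolates the \emph{lower}-tail integral $\int_0^{c}\left(B_1-D_1\right)$, not the upper-tail integral appearing in the statement. These two conditions are genuinely inequivalent, since
\[
\int_0^{t}\left(B_1-D_1\right)+\int_t^{\infty}\left(B_1-D_1\right)=E_D\left[X\right]-E_B\left[X\right],
\]
so they differ by the mean gap and neither implies the other. The upper-tail form $\int_t^{\infty}\left(B_1-D_1\right)\ge 0$ characterises dominance for increasing \emph{convex} functions --- the convention this paper adopts for disutilities over losses in Definition \ref{def:(SDk-ordering)} --- for which both terms in the display above become non-negative and the correct test family is $\max(x-c,0)$, since $E_F\left[\max(X-c,0)\right]=\int_c^{\infty}\left(1-F\right)$ isolates exactly $\int_c^{\infty}\left(B_1-D_1\right)$. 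The increasing concave class you work with (natural, given the surrounding $\log$-utility discussion in Definition \ref{def:Logarithmic-utility-function}) is instead characterised by the lower-tail form. So your argument, carried through honestly, proves the lower-tail version of the lemma, which is not the lemma as stated; to prove the stated version you must switch the utility class to increasing convex and swap the test family accordingly. Until you commit to one class and its matching tail, the equivalence is not established. Your other worry --- the boundary term --- is real but secondary: for $u_1=\log$ one has $u_1'(0^{+})=\infty$, so $\Phi_1(0)\,u_1'(0^{+})$ needs a truncation or an explicit integrability hypothesis, but this is repairable, whereas the orientation mismatch is not repairable without changing what is being proved.
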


We assume that the consumption shocks in the two markets, dominant
and Bitcoin, are roughly the same,
\[
D_{1}\approx B_{1}
\]
as both crypto-currencies are part of the same general economy (i.e.,
$cs$ is rendered $C$ as in its initial definition), thus monetary
shocks play a pivotal role: users prefer the stochastically dominant
crypto-currency given the projected expected utility of its stochastically
dominant monetary policy, $D_{2}$, over Bitcoin's $B_{2}$ (resp.
any other PoW/PoS crypto-currency).
\begin{thm}
\label{thm:dominantHigherPrice}\textbf{\textup{(Dominant Expected
Returns).}} A dominance relationship between the distribution of the
monetary policy shock of the stochastically dominant crypto-currency,
$ms_{D}\left(D_{2}\right)$, over Bitcoin, $ms_{B}\left(B_{2}\right)$,
implies a rise $E_{t}\left(\Delta q_{t+1}\right)$ \ref{prop:arbitrage-free-return}
of the price of the stochastically dominant crypto-currency $D$ in
terms of Bitcoin (resp. any other PoW/PoS crypto-currency).
\end{thm}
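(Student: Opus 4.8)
The plan is to reduce the theorem to a single sign computation inside the arbitrage-free return formula of Lemma~\ref{prop:arbitrage-free-return}. That decomposition reads
\[
E_{t}\left(\Delta q_{t+1}\right)=\underbrace{r_{t}^{B}-r_{t}^{D}}_{\text{rate differential}}+\underbrace{\tfrac{1}{2}\left[\mathrm{Var}_{t}\left(m_{t+1}^{B}\right)-\mathrm{Var}_{t}\left(m_{t+1}^{D}\right)\right]}_{\text{precautionary differential}}+\underbrace{E_{t}m_{t+1}^{ms_{B}}-E_{t}m_{t+1}^{ms_{D}}}_{\text{monetary differential}},
\]
and the standing assumption $D_{1}\approx B_{1}$ (identical consumption shocks across the two markets) means the monetary-policy comparison enters only through the monetary shocks. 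So it suffices to show that dominance of $ms_{D}$ over $ms_{B}$ forces the monetary differential $E_{t}m_{t+1}^{ms_{B}}-E_{t}m_{t+1}^{ms_{D}}\geq0$; the conclusion then coincides with the closing remark of Lemma~\ref{prop:arbitrage-free-return}.

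First I would make the monotone structure of the monetary SDF explicit. From the CRRA form, $M_{t+1}^{ms_{j}}=\left(ms_{j,t+1}/ms_{j,t}\right)^{-\gamma}$, so $m_{t+1}^{ms_{j}}=-\gamma\log\left(ms_{j,t+1}/ms_{j,t}\right)$ is a strictly decreasing transformation of the shock growth, since $\gamma>0$. The key step is then to feed this transformation into the stochastic-dominance characterisations of Lemmas~\ref{prop:FSD-dominant} and \ref{prop:SSD-dominant}: dominance of $ms_{D}$ over $ms_{B}$ (equivalently $B_{2}\geq D_{2}$ as CDFs, resp. the integrated-CDF inequality) means $E_{D_{2}}\left[g\right]\geq E_{B_{2}}\left[g\right]$ for every increasing $g$ in the first-order case and every increasing concave $g$ in the second-order case. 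Applying this with $g=\log$, which is increasing and concave and hence admissible under either order, gives $E_{t}\log ms_{D,t+1}\geq E_{t}\log ms_{B,t+1}$; multiplying by the negative factor $-\gamma$ reverses the inequality to $E_{t}m_{t+1}^{ms_{D}}\leq E_{t}m_{t+1}^{ms_{B}}$, which is precisely the nonnegative monetary differential required. Substituting back into Lemma~\ref{prop:arbitrage-free-return} yields the asserted rise $E_{t}\left(\Delta q_{t+1}\right)>0$ whenever at least one dominance inequality is strict, i.e. the appreciation of $D$ in terms of Bitcoin.

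I expect the main obstacle to be the precautionary differential $\tfrac12[\mathrm{Var}_{t}(m_{t+1}^{B})-\mathrm{Var}_{t}(m_{t+1}^{D})]$. Under the independence decomposition $m^{D}=m^{C_{D}}+m^{ms_{D}}$ and the assumption $D_{1}\approx B_{1}$, the consumption part of this bracket cancels and it collapses to $\tfrac12[\mathrm{Var}_{t}(m_{t+1}^{ms_{B}})-\mathrm{Var}_{t}(m_{t+1}^{ms_{D}})]$, which is itself governed by the monetary shocks. Stochastic dominance, however, orders expectations of monotone (or monotone concave) functions, and so pins down the mean channel, but it does not by itself sign a difference of variances; the log-normality assumption used to produce the $\tfrac12\mathrm{Var}$ correction, together with the Euler relations of Lemma~\ref{prop:(Fundamental-Pricing-Equation)}, ties this term to the same distributions without fixing its sign. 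The careful part is therefore to argue either that the monetary shocks are calibrated so that this precautionary term is second order relative to the mean channel, or to supplement the FSD/SSD hypothesis with a mild spread condition ensuring $\mathrm{Var}_{t}(m^{ms_{D}})\leq\mathrm{Var}_{t}(m^{ms_{B}})$, so that the variance differential reinforces rather than offsets the appreciation. A secondary subtlety is the sign bookkeeping through the $-\gamma$ exponent together with the orientation of the CDF inequality $B_{2}\geq D_{2}$, where a single sign slip would invert the direction of appreciation; I would fix the orientation once, at the level of $\log$ being increasing, and propagate it consistently.
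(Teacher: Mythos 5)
Your proposal is correct and follows the same core reduction as the paper: both arguments hold the rate and variance terms of Lemma \ref{prop:arbitrage-free-return} fixed and reduce the theorem to signing the monetary differential $E_{t}m_{t+1}^{ms_{B}}-E_{t}m_{t+1}^{ms_{D}}$ using the CDF inequality $B_{2}\geq D_{2}$ of Lemma \ref{prop:FSD-dominant}. The mechanics differ in a minor but instructive way: you apply the abstract FSD characterisation ($E_{D_{2}}\left[g\right]\geq E_{B_{2}}\left[g\right]$ for increasing $g$) to $g=\log$, the transformation appearing in the log SDF $m_{t+1}^{ms_{j}}=-\gamma\log\left(ms_{j,t+1}/ms_{j,t}\right)$, and propagate the sign through $-\gamma$; the paper instead works under log utility ($\gamma=1$, Definition \ref{def:Logarithmic-utility-function}) directly with the level of the SDF, writing the differential as $\int\frac{1}{t}dB_{2}\left(t\right)-\int\frac{1}{t}dD_{2}\left(t\right)$ and signing it by integration by parts, $-\int\left(D_{2}-B_{2}\right)\frac{1}{t^{2}}dt>0$, i.e.\ it exploits that $1/t$ is decreasing where you exploit that $\log$ is increasing; the two test functions are interchangeable for this purpose. (Strictly speaking the paper conflates the log SDF $m$ with the level SDF $M$ in that computation; your version is the one literally consistent with the notation of Lemma \ref{prop:arbitrage-free-return}.) Where you go beyond the paper is the precautionary term: the paper's proof never mentions $\frac{1}{2}\left[\mathrm{Var}_{t}\left(m_{t+1}^{B}\right)-\mathrm{Var}_{t}\left(m_{t+1}^{D}\right)\right]$, implicitly reading the theorem ceteris paribus as in the closing sentence of Lemma \ref{prop:arbitrage-free-return}, whereas you correctly note that under independence and $D_{1}\approx B_{1}$ this bracket collapses to the monetary-shock variances, which FSD alone cannot sign, and you propose an explicit supplementary spread condition. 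That caveat identifies a real looseness in the paper's own argument rather than a defect in yours.
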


\begin{proof}
When $D_{2}$ dominates $B_{2}$ in the first-order sense \ref{prop:FSD-dominant},
then
\[
E_{t}m_{t+1}^{ms_{B}}-E_{t}m_{t+1}^{ms_{D}}>0
\]
and vice versa. Thus,
\begin{align*}
E_{t}m_{t+1}^{ms_{B}}-E_{t}m_{t+1}^{ms_{D}} & =ms_{B,t}E_{t}\left(\frac{1}{ms_{B,t+1}}\right)-ms_{D,t}E_{t}\left(\frac{1}{ms_{D,t+1}}\right)\\
 & =\int\frac{1}{t}dB_{2}\left(t\right)-\int\frac{1}{t}dD_{2}\left(t\right)\\
 & =\int\frac{1}{t}d\left(B_{2}-D_{2}\right)\\
 & =\int\left(D_{2}-B_{2}\right)d\frac{1}{t}\\
 & =-\int\left(D_{2}-B_{2}\right)\frac{1}{t\text{\texttwosuperior}}dt>0
\end{align*}
and the consequent rise of $E_{t}\left(\Delta q_{t+1}\right)$ as
given by \ref{prop:arbitrage-free-return}.
\end{proof}

\subsubsection{Efficient Stochastically Dominant Portfolio}

The stochastic dominance between two crypto-currencies of \ref{prop:FSD-dominant}
and \ref{prop:SSD-dominant} further extends into a stochastically
dominant portfolio of crypto-currencies. As in previous sections,
we can distinguish between different orders of portfolio dominance:
first (Definition \ref{def:FSD-Portfolio-Dominance}), second (Definition
\ref{def:SSD-Portfolio-Dominance}), ..., SD$k$-orders (Definition
\ref{def:sdk-Portfolio--dominance}) of portfolio dominance. Note
that recent empirical research corroborates that the inclusion of
crypto-currencies in portfolios is itself stochastically dominant
\cite{ASDcryptoFactorPortfolio,bitcoinStochasticSpanning,bootstrapStochasticBitcoin,tradingCryptoSSD,invBehaviorCrypto,diversificationCryptoSD}.

Given $N$ alternatives and a random vector of their outcomes $\varrho$,
a decision maker can combine them into portfolios and all portfolio
possibilities are denoted by
\[
\Lambda=\left\{ \lambda\in\mathbb{R}^{N}|1'\lambda=1,\,\,\,\,\lambda_{n}\geq0,\,\,\,\,n=1,2,\ldots,N\right\} 
\]

\begin{defn}
\label{def:FSD-Portfolio-Dominance}Portfolio $\lambda\in\varLambda$
dominates portfolio $\tau\in\varLambda$ by the first-order stochastic
dominance ($\varrho'\lambda\leq_{SD1}\varrho'\tau$) if
\[
F_{\varrho'\lambda}\left(x\right)\leq_{SD1}F_{\varrho'\tau}\left(x\right),\,\,\,\,\,\forall x\in\mathbb{R}
\]
with strict inequality for at least one $x\in\mathbb{R}$ and with
$F_{\varrho'\lambda}\left(x\right)$ denoting the cumulative probability
distribution of returns of portfolio $\lambda$. Necessary and sufficient
conditions for the first-order stochastic dominance ($\varrho'\lambda\leq_{SD1}\varrho'\tau$)
if:
\end{defn}

\begin{itemize}
\item $Eu\left(\varrho'\lambda\right)\geq Eu\left(\varrho'\tau\right)$
for all expected utility ($Eu$) functions and strict inequality holds
for at least some utility function
\item $F_{\varrho'\lambda}^{-1}\left(y\right)\leq F_{\varrho'\tau}^{-1}\left(y\right)$
for all $y\in\left[0,1\right]$ with strict inequality for at least
one $y\in\left[0,1\right]$
\item $VaR_{\alpha}\left(-\varrho'\lambda\right)\leq VaR_{\lambda}\left(-\varrho'\tau\right)$
for all $\alpha\in\left[0,1\right]$ with strict inequality for at
least one $\alpha\in\left[0,1\right]$
\end{itemize}
\begin{defn}
\label{def:FSD-efficient} A given portfolio $\tau\in\varLambda$
is first-order stochastic dominant (Definition \ref{def:FSD-Portfolio-Dominance})
inefficient if there exists portfolio $\lambda\in\varLambda$ such
that $\varrho\text{'}\lambda\leq_{SD1}\varrho\text{'}\lambda$. Otherwise,
portfolio $\tau$ is first-order stochastic dominant efficient.
\end{defn}

Second-order stochastic dominance can be similarly defined as first-order:
\begin{defn}
\label{def:SSD-Portfolio-Dominance}Portfolio $\lambda\in\varLambda$
dominates portfolio $\tau\in\varLambda$ by the second-order stochastic
dominance ($\varrho'\lambda\leq_{SD2}\varrho'\tau$) if and only if
\[
F_{\varrho'\lambda}^{(2)}\left(y\right)\leq_{SD2}F_{\varrho'\tau}^{(2)}\left(y\right),\,\,\,\,\,\forall y\in\mathbb{R}
\]
with strict inequality for at least one $y\in\mathbb{R}$ and with
$F_{\varrho'\lambda}^{(2)}\left(y\right)$ denoting the twice cumulative
probability distribution of returns of portfolio $\lambda$. Necessary
and sufficient conditions for the second-order stochastic dominance
($\varrho'\lambda\leq_{SD1}\varrho'\tau$) if:
\end{defn}

\begin{itemize}
\item $Eu\left(\varrho'\lambda\right)\geq Eu\left(\varrho'\tau\right)$
for all expected concave utility functions and strict inequality holds
for at least some concave utility function
\item Non-satiable and risk-averse decision maker prefers portfolio $\tau$
to portfolio $\lambda$ and at least one prefers $\lambda$ to $\tau$
\item $F_{\varrho'\lambda}^{-2}\left(y\right)\leq F_{\varrho'\tau}^{-2}\left(y\right)$
for all $y\in\left[0,1\right]$ with strict inequality for at least
one $y\in\left[0,1\right]$, where $F_{\varrho'\lambda}^{-2}\left(y\right)$
is a cumulated quantile function
\item $CVaR_{\alpha}\left(-\varrho'\lambda\right)\leq CVaR_{\lambda}\left(-\varrho'\tau\right)$
for all $\alpha\in\left[0,1\right]$ with strict inequality for at
least one $\alpha\in\left[0,1\right]$, where
\begin{align*}
CVaR_{\alpha}\left(-r'\lambda\right)=\underset{v\in\mathbb{R},z_{t}\in\mathbb{R}^{+}}{\text{min}} & v+\frac{1}{1-\alpha}\sum_{t=1}^{S}p_{t}z_{t}\\
\text{such that} & z_{t}\geq-x^{t}\lambda-v,\,\,\,\,t=1,2,\ldots,S
\end{align*}
\end{itemize}
\begin{defn}
\label{def:SSD-efficient} A given portfolio $\tau\in\varLambda$
is second-order stochastic dominant (Definition \ref{def:FSD-Portfolio-Dominance})
inefficient if there exists portfolio $\lambda\in\varLambda$ such
that $\varrho\text{'}\lambda\leq_{SD2}\varrho\text{'}\lambda$. Otherwise,
portfolio $\tau$ is second-order stochastic dominant efficient.
\end{defn}

The previous two definitions, first-order and second-order, can be
generalised to the $k$-order:
\begin{defn}
\label{def:sdk-Portfolio--dominance}Portfolio $\lambda$ dominates
portfolio $\tau$ with respect to the $k$-order stochastic dominance
($\lambda\leq_{SDk}\tau$) if $Eu\left(\varrho'\lambda\right)\geq Eu\left(\varrho'\tau\right)$
for all utility functions $u\in U_{n}$ with strict inequality for
at least one such utility function, with $U_{N}$ being the set of
$N$ times differentiable utility functions such that: $\left(-1\right)^{i}u^{\left(i\right)}\leq0$
for all $i=1,2,\ldots,N$.
\end{defn}

~
\begin{defn}
\label{def:SDk-efficient} A given portfolio $\tau$ is SD$k$-efficient
($k\geq2$) if there exists at least one utility function $u\in U_{N}$
such that $Eu\left(\varrho'\tau\right)-Eu\left(\varrho'\lambda\right)\geq0$
for all $\lambda\in\varLambda$ with strict inequality for at least
one $\lambda\in\varLambda$.
\end{defn}

And the previous one period stochastic dominance can be generalised
to the multi-period setting:
\begin{thm}
\label{thm:multiPeriodFSD}( \cite{multiperiodSD} ). Let $F^{n}\left(x\right)$
and $G^{n}\left(x\right)$ be the cumulative distributions of two
$n$-period risks where $n$ is the number of periods and $x$ is
the product of the returns corresponding to each period ($x=x_{1},x_{2},\ldots,x_{n}$).
Then, a sufficient condition for $F^{n}$ dominance over $G^{n}$
by first-order stochastic dominance for every non-decreasing utility
function is that such dominance exists in each period, namely:
\[
F_{i}\left(x_{i}\right)\leq G_{i}\left(x_{i}\right),\,\,\,\,\forall i,\left(i=1,2,\ldots,n\right)
\]
 and there is at least one strict inequality, namely:
\[
F_{i}\left(x_{i_{0}}\right)<G_{i}\left(x_{i_{0}}\right)
\]
for some $x_{i_{0}}$.
\end{thm}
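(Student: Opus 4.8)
The plan is to reduce the multiperiod claim to the per-period hypotheses through the quantile (Strassen) coupling characterization of first-order stochastic dominance, exploiting two structural facts: the per-period returns are independent across the $n$ periods, and returns are non-negative, so that products preserve almost-sure orderings. The whole argument is an ``assemble a good coupling, then multiply'' strategy.

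First I would recall that the hypothesis $F_i(x_i)\leq G_i(x_i)$ for all $x_i$ is equivalent to the existence, on a common probability space, of coupled variables $R_i^F\sim F_i$ and $R_i^G\sim G_i$ with $R_i^F\geq R_i^G$ almost surely: take a single uniform $U$ and set $R_i^F=F_i^{-1}(U)$, $R_i^G=G_i^{-1}(U)$, since $F_i\leq G_i$ gives $F_i^{-1}\geq G_i^{-1}$ pointwise. I would then realize the $n$ coupled pairs $(R_i^F,R_i^G)$ on a product space so that they are mutually independent across periods. Because the $n$-period risk is a product of independent per-period returns, this construction keeps the correct marginals, so that $\prod_{i=1}^n R_i^F\sim F^n$ and $\prod_{i=1}^n R_i^G\sim G^n$.

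Next, using non-negativity of returns, the almost-sure inequalities $R_i^F\geq R_i^G\geq 0$ multiply to give $\prod_{i=1}^n R_i^F\geq \prod_{i=1}^n R_i^G$ almost surely. A pathwise domination of coupled variables is precisely first-order stochastic dominance of their laws, so $F^n(x)\leq G^n(x)$ for every $x$, equivalently $E\,u(\prod_i R_i^F)\geq E\,u(\prod_i R_i^G)$ for every non-decreasing $u$, which is the asserted weak dominance. (An equivalent route takes logarithms, uses that FSD is preserved under the increasing map $\log$ and under convolution of independent summands, and exponentiates; the coupling argument subsumes this and avoids invoking convolution closure separately.)

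The main obstacle is the strictness clause, and I would handle it with a positive-probability argument. The strict per-period inequality $F_{i_0}(x_{i_0})<G_{i_0}(x_{i_0})$ forces $F_{i_0}^{-1}(u)>x_{i_0}\geq G_{i_0}^{-1}(u)$ for every $u$ in the interval $\bigl(F_{i_0}(x_{i_0}),G_{i_0}(x_{i_0})\bigr)$, a set of uniforms of positive length, hence $P(R_{i_0}^F>R_{i_0}^G)>0$. Since all factors are strictly positive, this single strict factor yields $\prod_i R_i^F>\prod_i R_i^G$ with positive probability. If the product laws coincided, an almost-sure ordering together with equal distributions would force almost-sure equality, contradicting this positive-probability strict inequality; therefore $F^n\not\equiv G^n$, and combined with the pointwise bound $F^n\leq G^n$ this produces $F^n(x)<G^n(x)$ at some $x$, the required strict first-order dominance. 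The two delicate points to get right are that independence across periods is exactly what legitimizes building the joint coupling while preserving the product marginals, and that positivity of returns is essential in the strictness step, preventing a vanishing factor from washing out the per-period gain.
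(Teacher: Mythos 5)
The paper itself contains no proof of this theorem: it is quoted, with attribution, from \cite{multiperiodSD}, so there is no internal argument to compare yours against. Judged on its own terms, your coupling proof is correct, and it takes a genuinely different route from the classical one in the cited source, which proceeds analytically, writing the $n$-period distribution by conditioning (for two periods, $F^{2}(x)=\int F_{1}(x/t)\,dF_{2}(t)$ for positive returns) and inducting on the number of periods. Your quantile coupling ($R_{i}^{F}=F_{i}^{-1}(U_{i})$, $R_{i}^{G}=G_{i}^{-1}(U_{i})$ with the same uniform within a period, independent uniforms across periods) is cleaner: the pathwise inequalities multiply, pathwise domination of coupled variables gives $F^{n}\leq G^{n}$, and your strictness step is sound --- right-continuity of $F_{i_{0}}$ is what guarantees $F_{i_{0}}^{-1}(u)>x_{i_{0}}\geq G_{i_{0}}^{-1}(u)$ for every $u$ in the open interval $\bigl(F_{i_{0}}(x_{i_{0}}),G_{i_{0}}(x_{i_{0}})\bigr)$, and the fact that almost-sure ordering plus equality in law forces almost-sure equality closes the contradiction. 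Two caveats, both of which you flag but which deserve emphasis because the statement as transcribed in the paper omits them: (i) independence of returns across periods is a standing hypothesis of the cited framework, not a cosmetic convenience --- without it $F^{n}$ is not even determined by the marginals $F_{i}$, so the theorem would be ill-posed; (ii) non-negativity of returns is needed for products to preserve the ordering, and your strictness step additionally needs the off-period factors $R_{i}^{G}$, $i\neq i_{0}$, to be positive with positive probability (automatic for gross returns in Levy's setting, but worth stating, since a factor that is zero almost surely would wash out the period-$i_{0}$ gain). Stating these as explicit assumptions at the outset would make your argument a complete, self-contained replacement for the citation.
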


~
\begin{thm}
\label{thm:multiPeriodSSD} ( \cite{multiperiodSD} ). A sufficient
condition for $F^{n}$ dominance over $G^{n}$ by second-order stochastic
dominance for all non-decreasing concave utility functions is that
such dominance exists in each period, namely:
\[
\int_{0}^{x_{i}}\left[G_{i}\left(t_{i}\right)-F_{i}\left(t_{i}\right)\right]dt_{i}\geq0,\,\,\,\,\forall i,\left(i=1,2,\ldots,n\right)
\]
and there is at least one strict inequality.
\end{thm}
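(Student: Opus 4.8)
The plan is to prove the statement by induction on the number of periods $n$, using the expected-utility characterisation of second-order stochastic dominance. Writing $X^{(n)}=X_1X_2\cdots X_n$ for the $F^n$-distributed product of returns and $Y^{(n)}=Y_1\cdots Y_n$ for the $G^n$-distributed one, and recalling (as in Definition \ref{def:SSD-Portfolio-Dominance}) that $F^n$ dominates $G^n$ by SSD exactly when $E\,u(X^{(n)})\ge E\,u(Y^{(n)})$ for every non-decreasing concave $u$, it suffices to show that the per-period hypotheses $\int_0^{x_i}[G_i(t_i)-F_i(t_i)]\,dt_i\ge 0$ — which are precisely the one-period SSD relations asserting that $F_i$ dominates $G_i$ — propagate through the product. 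Throughout I would assume, as is standard in the multi-period setting of \cite{multiperiodSD} and consistent with Theorem \ref{thm:multiPeriodFSD}, that the per-period returns are independent and non-negative; the non-negativity is exactly what makes the argument go through.

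The base case $n=1$ is immediate, since the hypothesis is then verbatim the one-period SSD condition. For the inductive step I would factor $X^{(n)}=P\cdot X_n$ with $P=X_1\cdots X_{n-1}$ and condition on the last period, so that by independence
\[
E\,u(X^{(n)})=E_{X_n}\!\big[\,E_{P}\,u(P\,X_n)\,\big].
\]
For each fixed $x_n\ge 0$ the map $p\mapsto u(p\,x_n)$ is non-decreasing and concave, being the composition of the non-decreasing concave $u$ with the non-negative linear scaling $p\mapsto p\,x_n$; hence the inductive hypothesis (that $F^{n-1}$ dominates $G^{n-1}$ by SSD) yields $E_{P}\,u(P\,x_n)\ge E_{Q}\,u(Q\,x_n)$, where $Q$ denotes the first-$(n-1)$-period product under $G$. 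Setting
\[
\psi(x_n):=E_{Q}\,u(Q\,x_n),
\]
the chain so far gives $E\,u(X^{(n)})\ge E_{X_n}\,\psi(X_n)$.

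It then remains to compare $E_{X_n}\psi(X_n)$ with $E_{Y_n}\psi(Y_n)=E\,u(Y^{(n)})$, and for this I would invoke the one-period hypothesis for period $n$. The key point is that $\psi$ inherits the correct shape: for every realisation $Q\ge 0$ the map $x_n\mapsto u(Q\,x_n)$ is again non-decreasing concave, and taking the expectation over $Q$ preserves both properties, so $\psi$ is non-decreasing concave and the SSD dominance of $F_n$ over $G_n$ delivers $E_{X_n}\psi(X_n)\ge E_{Y_n}\psi(Y_n)$. Chaining the two inequalities closes the induction.

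The main obstacle — and the only place where genuine care is needed — is the preservation of concavity under the two operations above: first under the composition of $u$ with the scaling, which is exactly why the non-negativity of returns is indispensable (a negative scaling would reverse monotonicity and break the sign bookkeeping), and second under the expectation defining $\psi$, which requires checking that $E_{Q}\,u(Q\,x_n)$ is finite and that averaging commutes with the relevant limits; here I would lean on the integrability already assumed for the utility data. Finally, to upgrade weak dominance to the strict conclusion, I would track the single period $i_0$ at which the hypothesis is strict: for that period there is a non-decreasing concave test function rendering its one-period inequality strict, and since the scaling factors $P$ and $Q$ are almost surely positive the strictness is not annihilated as it is carried through the conditioning identity, so strict dominance survives in the product.
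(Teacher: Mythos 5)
The paper contains no proof for you to be compared against: Theorem \ref{thm:multiPeriodSSD} is imported verbatim from \cite{multiperiodSD}, with the citation standing in for the argument. Judged on its own merits, your proof is correct, and it is essentially the classical one: your inductive conditioning step is precisely the closure of the increasing-concave (SSD) order under multiplication by an independent non-negative factor. Both inequalities in your chain are sound --- for fixed $x_{n}\geq0$ the map $p\mapsto u\left(px_{n}\right)$ is non-decreasing and concave, and $\psi\left(x\right)=E_{Q}\,u\left(Qx\right)$ inherits both properties since pointwise mixtures of non-decreasing concave functions are non-decreasing and concave --- so per-period SSD propagates through the product. You are also right, and right to make explicit, the two hypotheses the quoted statement leaves silent: non-negativity of returns (a negative scaling factor reverses monotonicity and the argument collapses) and independence across periods. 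The latter deserves emphasis: it is not a harmless convention but genuinely necessary, since with adversarial dependence between periods, per-period dominance of the marginals implies nothing about the product (the same caveat applies to Theorem \ref{thm:multiPeriodFSD}); it is, however, the standing assumption of the multi-period framework in \cite{multiperiodSD}, so your reconstruction matches the intended setting. A cosmetic variant of your induction --- replacing one coordinate at a time, comparing $X_{1}\cdots X_{n}$, then $Y_{1}X_{2}\cdots X_{n}$, and so on, each step conditioning on the independent complement --- avoids the nested $\psi$'s but is the same proof.

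The one genuinely soft spot is your final paragraph on strictness: ``the strictness is not annihilated'' is asserted rather than proved, and ``almost surely positive'' is an assumption you introduce beyond non-negativity. This can be tightened as follows. Choose $u$ bounded, strictly increasing and strictly concave (e.g.\ $u\left(x\right)=1-e^{-x}$), which also disposes of the integrability worries you defer to ``the utility data.'' Then $\psi\left(x\right)=E_{Q}\,u\left(Qx\right)$ has $\psi''\left(x\right)=E\left[Q^{2}u''\left(Qx\right)\right]<0$ whenever $P\left(Q>0\right)>0$, and since $t\mapsto\int_{0}^{t}\left[G_{i_{0}}\left(s\right)-F_{i_{0}}\left(s\right)\right]ds$ is continuous and strictly positive at $x_{i_{0}}$, hence on a neighbourhood of it, two integrations by parts yield the strict inequality $E\,\psi\left(X_{i_{0}}\right)>E\,\psi\left(Y_{i_{0}}\right)$ at the period-$i_{0}$ step, while the remaining steps preserve weak inequality. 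Note this needs only $P\left(Q>0\right)>0$ --- weaker than your almost-sure positivity --- but some positivity is indispensable: if the complementary product $Q$ were zero almost surely, every $\psi$ would be constant and strictness would indeed be annihilated. With that repair the proof is complete and, as far as one can tell, coincides in substance with the argument in the cited source.
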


Finally, note that the concept of stochastic dominance also extends
to strategy-proof allocation rules and game strategies:~
\begin{defn}
\label{def:(SD-strategyproof)}. A strategy $s$ is \textit{stochastic
dominance strategy-proof} if, for all investors $i\in I$, all security
ranking profiles $\left(R_{i},R_{-i}\right)\in R^{I}$, and all misreports
$R_{i}^{'}\in R$, investor $i$'s assignment $x_{i,j}\in s_{i}\left(R_{i},R_{-i}\right)$
\textit{stochastically dominates} $y_{i,j}\in s_{i}\left(R_{i}^{'},R_{-i}\right)$
at $R_{i}$ (i.e., independent of the other investors' ranking reports),
that is,
\[
\sum_{s_{i}\left(R_{i},R_{-i}\right)}x_{i,j}\geq\sum_{s_{i}\left(R_{i}^{'},R_{-i}\right)}y_{i,j}
\]
Alternatively, \textit{stochastic dominance strategy-proof} can also
be defined in terms of expected utility if, for all utility functions
$u_{i}\in U_{R_{i}}$, we have that
\[
Eu_{s_{i}\left(R_{i},R_{-i}\right)}\left(u_{i}\right)\geq Eu_{s_{i}\left(R_{i}^{'},R_{-i}\right)}\left(u_{i}\right)
\]
All the previous definitions naturally lead to the following theorem
regarding the stochastically dominant crypto-currency:
\end{defn}

\begin{thm}
\label{thm:efficient-portfolio-stochastically-dominant}The efficient
portfolio is to go long on the stochastically dominant crypto-currency:
thus, the stochastically dominant strategy-proof allocation rule for
any investor is to hold this efficient portfolio with the stochastically
dominant crypto-currency. Furthermore, a higher return can be expected
from the stochastically-dominant crypto-currency.
\end{thm}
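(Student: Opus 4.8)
The plan is to assemble the portfolio result from three ingredients already in place: the single-asset dominance of the dominant crypto-currency $D$, the marginal conditional stochastic dominance (MCSD) \cite{mcsd} lifting of that dominance to the portfolio level, and the strategy-proofness definition. First I would record that, by Theorem \ref{thm:dominantHigherPrice} together with Lemmas \ref{prop:FSD-dominant} and \ref{prop:SSD-dominant}, the return distribution of $D$ stochastically dominates that of Bitcoin, and by the same market-by-market argument that of any other PoW/PoS crypto-currency. The driving force is the SD$k$-optimality of $D$'s monetary policy established in Theorem \ref{thm:finding-sdk}, which is precisely what yields the dominance of the monetary-shock marginal $D_2$ over $B_2$, and hence of $D$'s returns over every competing asset. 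This delivers pairwise dominance of $D$ over each alternative $n$, i.e. $\varrho_D \leq_{\text{SD}k}\varrho_n$ in the ordering of Definition \ref{def:sdk-Portfolio--dominance}.

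Next I would lift this pairwise dominance to the portfolio orderings of Definitions \ref{def:FSD-Portfolio-Dominance}, \ref{def:SSD-Portfolio-Dominance} and \ref{def:sdk-Portfolio--dominance}. The natural route is MCSD: since $D$ marginally conditionally dominates every remaining asset, any portfolio $\tau\in\varLambda$ placing positive weight on a dominated asset can be improved, in the $\leq_{\text{SD}k}$ sense, by reallocating that weight to $D$; all dominated assets are thereby moved out and the only undominated allocation is the corner $\lambda_D=1$. Invoking the efficiency criteria of Definitions \ref{def:FSD-efficient}, \ref{def:SSD-efficient} and \ref{def:SDk-efficient}, the portfolio fully concentrated on $D$ is then SD$k$-efficient, while every portfolio containing a dominated asset is inefficient. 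To cover a holding horizon longer than one step, I would apply the multi-period extensions, Theorems \ref{thm:multiPeriodFSD} and \ref{thm:multiPeriodSSD}, which carry the per-period dominance forward to the product of returns, so that going long on $D$ across periods remains dominant.

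For the strategy-proofness claim I would appeal directly to Definition \ref{def:(SD-strategyproof)}: because the assignment allocating $D$ stochastically dominates every other assignment for the entire admissible class of investor utilities $U_{R_i}$, it does so independently of the other investors' reports $R_{-i}$, so no misreport $R_i'$ can produce a stochastically dominating, hence expected-utility-improving, assignment. Therefore holding the efficient portfolio built on $D$ is the stochastic-dominance strategy-proof rule for every investor $i\in I$. Finally, the higher expected return follows because first-order dominance implies a larger mean under the identity utility, so the positive $E_t\!\left(\Delta q_{t+1}\right)$ of Lemma \ref{prop:arbitrage-free-return} and Theorem \ref{thm:dominantHigherPrice} yields the last sentence.

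The hard part will be the lifting step. Pairwise stochastic dominance of each component does not automatically give dominance of the concentrated portfolio over an arbitrary convex combination: a risk-averse (SD2) investor could in principle extract a diversification benefit from mixing dominated assets that a naive comparison of marginals would miss, and the joint dependence structure of $\varrho$ matters. This is exactly why the argument must run through MCSD, or through almost marginal conditional stochastic dominance \cite{amcsd} to exclude pathological extreme utilities, which certifies that the marginal reallocation toward $D$ improves the portfolio given the conditional distribution. I would therefore need to verify that $D$'s dominance is strong enough that no diversification effect can overturn the corner solution.
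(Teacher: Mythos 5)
Your proposal is correct in substance and follows the same skeleton as the paper's proof --- pairwise dominance of $D$ from its SD$k$-optimal policy rule via Theorem \ref{thm:dominantHigherPrice}, portfolio efficiency via Definitions \ref{def:FSD-efficient}, \ref{def:SSD-efficient} and \ref{def:SDk-efficient}, the multi-period extension via Theorems \ref{thm:multiPeriodFSD} and \ref{thm:multiPeriodSSD}, and strategy-proofness via Definition \ref{def:(SD-strategyproof)} --- but it diverges in two instructive ways. First, on the lifting step: the paper's proof simply asserts that the SD$k$-efficient portfolio is the one containing the dominant crypto-currency ``by definition,'' since $D$ carries the policy rule that SD$k$-dominates all others; marginal conditional stochastic dominance \cite{mcsd} is invoked only in the section preamble, never inside the proof. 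You, by contrast, make MCSD (or \cite{amcsd}) the explicit engine that moves dominated assets out of the portfolio, and you correctly flag that pairwise dominance of marginals does not by itself defeat diversification benefits for an SD2 investor --- the dependence structure of $\varrho$ matters, so the corner solution $\lambda_D=1$ needs the conditional-dominance machinery. That concern is genuine, and your treatment is more careful than the paper's on precisely the point where the paper is weakest; your proof would be the stronger one if the MCSD reallocation argument were carried out in full rather than sketched. Second, on the higher expected return: the paper obtains it by iterated deletion of strictly dominated strategies when extending Theorem \ref{thm:dominantHigherPrice} to the market portfolio setting, a game-theoretic route, whereas you derive it directly from first-order dominance implying a mean ordering together with the positive $E_{t}\left(\Delta q_{t+1}\right)$ of Lemma \ref{prop:arbitrage-free-return}. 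Your route is more elementary and self-contained; the paper's buys a statement about equilibrium behaviour of market participants rather than just a distributional fact, which is what it needs for the subsequent Nash-equilibrium theorem.
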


\begin{proof}
Given a stochastic ranking of policy rules \ref{eq:sdk-optimal-policy}
of a stochastic ordering (Definition \ref{def:(SDk-ordering)}) of
monetary policy rules \ref{eq:rankingPolicyRules} that generates
a stochastically dominant crypto-currency by first-order \ref{prop:FSD-dominant}
or second-order \ref{prop:SSD-dominant} dominance: then, the SD$k$-efficient
portfolio \ref{def:SDk-efficient} (also, first-order \ref{def:FSD-efficient}
or second-order \ref{def:SSD-efficient} efficient) that dominates
with respect to the $k$-order stochastic dominance \ref{def:sdk-Portfolio--dominance}
(also, first-order \ref{def:FSD-Portfolio-Dominance} or second-order
\ref{def:SSD-Portfolio-Dominance} dominant) is the portfolio containing
the stochastically dominant crypto-currency as, by definition, this
the one crypto-currency with the SD$k$-optimal policy rule \ref{eq:sdk-optimal-policy}
that SD$k$ dominates (Definition \ref{def:(SDk-ordering)}) all the
other policy rules. The proof extends trivially to the multi-period
case by Theorems \ref{thm:multiPeriodFSD} and \ref{thm:multiPeriodSSD}. 

Moreover, for all investors $i\in I$, the stochastically dominant
strategy-proof allocation rule \ref{def:(SD-strategyproof)} is to
hold the SD$k$-efficient portfolio \ref{def:SDk-efficient} with
the crypto-currency with the SD$k$-optimal policy rule \ref{eq:sdk-optimal-policy}.

Furthermore, a higher return can be expected from the stochastically-dominant
crypto-currency by the iterated deletion of strictly dominated strategies
when extending Theorem \ref{thm:dominantHigherPrice} to the market
portfolio setting.
\end{proof}
Finally, the efficient and strategy-proof portfolio of Theorem \ref{thm:efficient-portfolio-stochastically-dominant}
induces an investment-efficient Nash equilibrium:
\begin{defn}
A mechanism $M$ induces efficient investment within $\epsilon$ by
investor $i\in I$ if, for all valuation functions $\mathbb{v}^{I\smallsetminus\left\{ i\right\} }\in\mathbb{V}^{I\smallsetminus\left\{ i\right\} }$,
if
\[
\hat{\mathbb{v}}^{i}\in\underset{\tilde{\mathbb{v}}{}^{i}\in\mathbb{V}^{i}}{\text{arg max}}\left\{ E_{\left(\mathbb{v}^{i},\mathbb{v}^{I\smallsetminus\left\{ i\right\} }\right)}\left[u^{i}\left(M\left(v^{i},v^{I\smallsetminus\left\{ i\right\} }\right);v^{i}\right)\right]-c^{i}\left(\tilde{\mathbb{v}}^{i}\right)\right\} 
\]
then we have
\begin{align*}
\left(E_{\left(\hat{\mathbb{v}}^{i},\mathbb{v}^{I\smallsetminus\left\{ i\right\} }\right)}\left[V\left(M\left(v^{i},v^{I\smallsetminus\left\{ i\right\} }\right);\left(v^{i},v^{I\smallsetminus\left\{ i\right\} }\right)\right)\right]\right)-c^{i}\left(\hat{\mathbb{v}}^{i}\right)+\epsilon\\
\geq\underset{\mathbb{v}^{i}\in\mathbb{V}^{i}}{\text{sup}}\left\{ \left(E_{\left(\tilde{\mathbb{v}}^{i},\mathbb{v}^{I\smallsetminus\left\{ i\right\} }\right)}\left[V\left(M\left(v^{i},v^{I\smallsetminus\left\{ i\right\} }\right);\left(v^{i},v^{I\smallsetminus\left\{ i\right\} }\right)\right)\right]\right)-c^{i}\left(\tilde{\mathbb{v}}^{i}\right)\right\} 
\end{align*}
for all cost functions $c^{i}$. In other words, a mechanism induces
efficient investment by $i$ within $\epsilon$ if, assuming agents
report truthfully, every expected utility-maximising investment choice
by $i$ maximises expected social welfare within $\epsilon$.
\end{defn}

\begin{thm}
For any stochastic uncertainties $\epsilon\geq0$ and $\eta\geq0$,
if the portfolio is approximately strategy-proof \ref{def:(SD-strategyproof)}
within $\epsilon$ for investor $i$ and approximately efficient within
$\eta$ (i.e., first-order \ref{def:FSD-efficient}, second-order
\ref{def:SSD-efficient} or SD$k$-efficient \ref{def:SDk-efficient}
), then it induces an approximately efficient investment within $\left(\epsilon+\eta\right)\cdot\left(\text{\#Crypto-currencies}\right)$
to $i$, independent of the other investors' investments. Furthermore,
the stochastically dominant crypto-currency induces a Nash equilibrium
over the crypto-currency market that maximises ex-ante social welfare.
\end{thm}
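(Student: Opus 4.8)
The plan is to show that approximate incentive compatibility and approximate allocative efficiency together force investor $i$'s private return to investment to track the social return, and then to read off the equilibrium claim from Theorem \ref{thm:efficient-portfolio-stochastically-dominant}. First I would fix investor $i$ and an arbitrary profile $\mathbb{v}^{I\smallsetminus\{i\}}$ of the other investors' valuations, and introduce the two objectives at stake: the \emph{private} objective $W^{i}(\tilde{\mathbb{v}}^{i})=E[u^{i}(M(\cdot);v^{i})]-c^{i}(\tilde{\mathbb{v}}^{i})$ that $i$ actually maximises when choosing an investment $\tilde{\mathbb{v}}^{i}$, and the \emph{social} objective $SW(\tilde{\mathbb{v}}^{i})=E[V(M(\cdot);\cdot)]-c^{i}(\tilde{\mathbb{v}}^{i})$ appearing in the conclusion. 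Writing $V=u^{i}+\sum_{j\neq i}u^{j}$, the two objectives differ exactly by the externality $E[\sum_{j\neq i}u^{j}]$ that $i$'s investment imposes on the rest of the market, so the whole argument reduces to bounding how much this externality can move as $\tilde{\mathbb{v}}^{i}$ varies.

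Second, I would use the two hypotheses to control that externality. Approximate strategy-proofness within $\epsilon$ (Definition \ref{def:(SD-strategyproof)}) guarantees that for every realised valuation truthful participation secures $i$ a surplus within $\epsilon$ of its best attainable surplus, so $i$'s measured private return $W^{i}$ differs from the surplus it would obtain in the efficient allocation by at most $\epsilon$ per crypto-currency outcome; approximate efficiency within $\eta$ (Definitions \ref{def:FSD-efficient}, \ref{def:SSD-efficient}, \ref{def:SDk-efficient}) guarantees that $E[V(M(\cdot);\cdot)]$ is within $\eta$ of the first-best social welfare for each such outcome. A telescoping inequality then chains these two facts: letting $\hat{\mathbb{v}}^{i}$ be privately optimal and $\mathbb{v}^{i,*}$ socially optimal, I would write
\[
SW(\hat{\mathbb{v}}^{i}) \;\geq\; W^{i}(\hat{\mathbb{v}}^{i}) \;\geq\; W^{i}(\mathbb{v}^{i,*}) \;\geq\; SW(\mathbb{v}^{i,*}) - (\epsilon+\eta)\cdot(\#\text{Crypto-currencies}),
\]
where the middle inequality is private optimality of $\hat{\mathbb{v}}^{i}$ and the two outer inequalities convert between $W^{i}$ and $SW$ at the cost of one $(\epsilon+\eta)$ slack for each of the $\#\text{Crypto-currencies}$ alternatives over which $V$ aggregates. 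Since $\mathbb{v}^{I\smallsetminus\{i\}}$ was arbitrary, this bound is independent of the other investors' investments, which is exactly the required notion of efficient investment within $(\epsilon+\eta)\cdot(\#\text{Crypto-currencies})$.

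I expect the \textbf{main obstacle} to be the per-alternative accounting that produces the factor $\#\text{Crypto-currencies}$: the externality $E[\sum_{j\neq i}u^{j}]$ must be bounded outcome-by-outcome across the entire crypto-market, and I would have to argue by a union-bound-style summation that each market segment contributes at most $\epsilon+\eta$ to the private--social discrepancy, rather than the two errors compounding multiplicatively. Keeping the strategy-proofness slack and the efficiency slack additive across segments is the delicate point; it rests on the fact that $V$ enters linearly as a sum of per-investor utilities and that the guarantee of Definition \ref{def:(SD-strategyproof)} holds pointwise for every fixed $\mathbb{v}^{I\smallsetminus\{i\}}$, so the slacks never interact.

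Finally, for the equilibrium claim I would invoke Theorem \ref{thm:efficient-portfolio-stochastically-dominant}: going long on the stochastically dominant crypto-currency is the stochastic-dominance strategy-proof allocation rule for \emph{every} investor simultaneously, so the profile in which all investors hold it is self-enforcing and hence a Nash equilibrium. Because the mechanism induces approximately efficient investment, this equilibrium maximises ex-ante expected social welfare up to the $(\epsilon+\eta)\cdot(\#\text{Crypto-currencies})$ tolerance, and by the dominant expected-return conclusion of Theorem \ref{thm:dominantHigherPrice} together with the SD$k$-optimality of its policy rule (Theorem \ref{thm:finding-sdk}) no investor can profitably deviate, which closes the argument.
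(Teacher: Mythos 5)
Your proposal should first be measured against what the paper actually does: the paper offers no argument at all for this theorem, its entire proof being the citation ``Follows trivially from Theorem 5 and Corollary 2 of \cite{SPequivalenceEfficiency}'', with Theorem 7 of that reference invoked for independence from the other investors' choices. Your attempt to reconstruct the underlying mechanism-design argument is therefore more ambitious than the paper itself, but it contains a genuine gap, located precisely in your displayed chain
\[
SW(\hat{\mathbb{v}}^{i}) \;\geq\; W^{i}(\hat{\mathbb{v}}^{i}) \;\geq\; W^{i}(\mathbb{v}^{i,*}) \;\geq\; SW(\mathbb{v}^{i,*}) - (\epsilon+\eta)\cdot(\text{\#Crypto-currencies}).
\]
The middle inequality (private optimality of $\hat{\mathbb{v}}^{i}$) is fine, but the two outer inequalities bound the \emph{level} of the externality $E\bigl[\sum_{j\neq i}u^{j}\bigr]$: the first requires it to be nonnegative (an unstated assumption on utilities), and the last requires it to be at most $(\epsilon+\eta)\cdot(\text{\#Crypto-currencies})$. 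Neither approximate strategy-proofness nor approximate efficiency says anything about the magnitude of the other investors' total utility, and in any market where the other investors derive non-trivial utility that last inequality is simply false; no union bound over market segments can rescue it, because each segment's externality is itself unbounded by the hypotheses.

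What the argument actually needs --- and what you correctly identify in prose (``bounding how much this externality can move as $\tilde{\mathbb{v}}^{i}$ varies'') but never implement --- is that the externality is approximately \emph{invariant} in $i$'s investment choice, not that it is small. This is the Green--Laffont/Holmstr\"om pivot structure underlying the cited equivalence result: strategy-proofness forces $i$'s payoff to equal realised social welfare minus a term (approximately) independent of $i$'s own report, and efficiency makes realised welfare approximately maximal, so the difference $h(\tilde{\mathbb{v}}^{i}) = SW(\tilde{\mathbb{v}}^{i}) - W^{i}(\tilde{\mathbb{v}}^{i})$ oscillates by at most the allowed slack. The valid chain is then $SW(\hat{\mathbb{v}}^{i}) \geq W^{i}(\mathbb{v}^{i,*}) + h(\hat{\mathbb{v}}^{i}) \geq SW(\mathbb{v}^{i,*}) - \bigl|h(\hat{\mathbb{v}}^{i}) - h(\mathbb{v}^{i,*})\bigr|$, and the entire content of the theorem is the bound on this oscillation, which your proposal nowhere establishes --- it is exactly the part the paper outsources to Theorem 5 and Corollary 2 of \cite{SPequivalenceEfficiency}. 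Your closing paragraph on the Nash equilibrium is consistent in spirit with the paper's intent (the paper derives both claims from Theorems 5, 7 and Corollary 2 of that reference), but it rests on the approximate-alignment step and so inherits the same gap.
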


\begin{proof}
Follows trivially from Theorem 5 and Corollary 2 of \cite{SPequivalenceEfficiency}.
It also holds in expectations for any given investment choice profile
of the other agents using Theorem 7 from \cite{SPequivalenceEfficiency}.
\end{proof}
\begin{rem}
\textbf{On the immovable commitment of crypto-currency monetary policy
rules and the lack of discretion}: most crypto-currencies follow the
example of Bitcoin, where the monetary policy was fixed since its
launch and it was pre-announced that it will never change. This is
in stark contrast with the monetary policy of fiat currencies, where
discretion is preferred in case of a financial crisis. In other words,
the widely accepted monetary policy stance of crypto-currencies to
fix their monetary policies not only leaves them vulnerable to a financial
crisis, but also turns them into dominated crypto-currencies by stochastically
dominant crypto-currencies.
\end{rem}

\subsubsection{\label{subsec:Omega-ratio}Omega ratio of Stochastically Dominant
Crypto-currencies}

The Omega ratio\cite{omegaRatio} is a risk-return performance measure
of an asset, portfolio, or strategy which takes into account all the
higher moment information in the returns distribution and also incorporates
sensitivity to return levels, unlike the Sharpe ratio. It is defined
as the probability-weighted ratio of gains versus losses for some
threshold return target $\theta$,
\[
\Omega\left(\theta\right)=\frac{\int_{\theta}^{\infty}\left[1-F\left(r\right)\right]dr}{\int_{-\infty}^{\theta}F\left(r\right)dr}=\frac{w^{T}E\left(r\right)-\theta}{E\left[\left(\theta-w^{T}r\right)_{+}\right]}+1
\]
Note that first-order stochastic dominance \ref{def:FSD-Portfolio-Dominance}
implies Omega ratio dominance:
\begin{thm}
(Theorem 2, \cite{stochasticDominanceOmegaRatio}). For any two returns
$X$ and $Y$ with means $\mu_{X}$ and $\mu_{Y}$ and Omega ratios
$\Omega_{X}\left(\eta\right)$ and $\Omega_{Y}\left(\eta\right)$,
respectively, if $X\leq_{SD1}Y$, then $\Omega_{X}\left(\eta\right)\geq\Omega_{Y}\left(\eta\right)$
for any $\eta\in R$.
\end{thm}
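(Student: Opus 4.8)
The plan is to reduce the claim to a direct comparison of the two integrals defining the Omega ratio, exploiting the fact that first-order stochastic dominance is, by Definition \ref{def:FSD-Portfolio-Dominance}, exactly a pointwise inequality between cumulative distribution functions. First I would unwind the paper's convention: $X \leq_{SD1} Y$ means that $X$ dominates $Y$, which is equivalent to $F_X(r) \leq F_Y(r)$ for every $r \in \mathbb{R}$, so the dominant return $X$ carries the uniformly smaller CDF. I would then write the Omega ratio of each return in its CDF form,
\[
\Omega_X(\eta) = \frac{\int_\eta^\infty \left[1 - F_X(r)\right]\,dr}{\int_{-\infty}^\eta F_X(r)\,dr},
\]
and the analogous expression for $Y$, so that the statement becomes a comparison of two ratios of integrals.

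The core of the argument is to treat numerator and denominator separately. Since $F_X \leq F_Y$ pointwise, the survival functions satisfy $1 - F_X(r) \geq 1 - F_Y(r)$ on $[\eta,\infty)$, whence the numerators obey $\int_\eta^\infty [1 - F_X]\,dr \geq \int_\eta^\infty [1 - F_Y]\,dr \geq 0$; symmetrically, on $(-\infty,\eta]$ we have $F_X(r) \leq F_Y(r)$, so the denominators obey $0 < \int_{-\infty}^\eta F_X\,dr \leq \int_{-\infty}^\eta F_Y\,dr$. The conclusion then follows from elementary monotonicity of the map $(A,B) \mapsto A/B$, which is non-decreasing in $A \geq 0$ and non-increasing in $B > 0$: abbreviating the numerators as $A_X \geq A_Y \geq 0$ and the denominators as $0 < B_X \leq B_Y$, one chains $A_X/B_X \geq A_Y/B_X \geq A_Y/B_Y$, i.e. $\Omega_X(\eta) \geq \Omega_Y(\eta)$. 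Because $\eta$ was arbitrary, the inequality holds for every threshold $\eta \in \mathbb{R}$, as required.

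The argument is essentially immediate once the CDF characterisation is invoked; the only points requiring care are integrability and sign conditions, and these are exactly what the hypotheses supply. Finiteness of the numerators is guaranteed by the existence of the means, since $\int_\eta^\infty [1 - F_X(r)]\,dr = E\left[(X - \eta)_+\right] \leq E[|X|] + |\eta| < \infty$ whenever $\mu_X$ exists, and likewise for $Y$. The main, and genuinely minor, obstacle is ensuring the denominator $\int_{-\infty}^\eta F_X(r)\,dr = E\left[(\eta - X)_+\right]$ is strictly positive so that the ratio and the monotonicity step are legitimate; this holds whenever the distribution places mass below $\eta$, which I would adopt as a standing non-degeneracy assumption, noting that in the degenerate boundary case the Omega ratio is $+\infty$ and the desired inequality holds vacuously. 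Finally, I would remark that the identical comparison can be carried out on the equivalent representation $\Omega(\eta) = (w^{T} E(r) - \eta)/E[(\eta - w^{T} r)_+] + 1$, using that first-order dominance preserves the ordering of expectations of the monotone payoffs $(\,\cdot\, - \eta)_+$ and $(\eta - \,\cdot\,)_+$, which serves as a useful cross-check on the direction of the inequality under the paper's sign convention.
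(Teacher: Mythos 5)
Your proof is correct. Note, however, that the paper itself offers no proof of this statement: it is imported verbatim as Theorem~2 of \cite{stochasticDominanceOmegaRatio}, so there is no internal argument to compare against, and your write-up effectively supplies the proof the paper delegates to its reference. Your route is the standard one for this result: unwind the convention that $X\leq_{SD1}Y$ means $F_{X}(r)\leq F_{Y}(r)$ pointwise (you read the paper's ordering correctly, matching Definition \ref{def:FSD-Portfolio-Dominance}), then compare numerator and denominator of
\[
\Omega\left(\eta\right)=\frac{\int_{\eta}^{\infty}\left[1-F\left(r\right)\right]dr}{\int_{-\infty}^{\eta}F\left(r\right)dr}
\]
separately --- the survival-function integral is larger for $X$ and the CDF integral smaller --- and conclude by monotonicity of $A/B$ in each argument. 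The two points you flag as requiring care are exactly the right ones and are handled adequately: finiteness of $\int_{\eta}^{\infty}\left[1-F_{X}\right]dr=E\left[\left(X-\eta\right)_{+}\right]$ follows from existence of the means, and the degenerate case $\int_{-\infty}^{\eta}F_{X}\,dr=0$ gives $\Omega_{X}\left(\eta\right)=+\infty$, so the inequality survives (and if the denominator for $Y$ vanishes, so does the smaller one for $X$, giving $+\infty\geq+\infty$). Since the theorem asserts only the weak inequality $\Omega_{X}\left(\eta\right)\geq\Omega_{Y}\left(\eta\right)$, you rightly do not need the strict-inequality clause of first-order dominance anywhere in the argument.
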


\begin{cor}
The efficient portfolio long on the stochastically dominant crypto-currency
of Theorem \ref{thm:efficient-portfolio-stochastically-dominant}
has a higher Omega ratio for any return threshold.
\end{cor}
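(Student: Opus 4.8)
The plan is to compose the two results that immediately precede the corollary: the dominance structure of the efficient portfolio established in Theorem~\ref{thm:efficient-portfolio-stochastically-dominant}, and the implication ``first-order stochastic dominance $\Rightarrow$ Omega-ratio dominance'' furnished by the cited Theorem~2 of \cite{stochasticDominanceOmegaRatio}. Concretely, I would write $X := \varrho'\lambda^{*}$ for the return of the efficient portfolio $\lambda^{*}\in\varLambda$ that goes long on the stochastically dominant crypto-currency, and $Y := \varrho'\tau$ for the return of an arbitrary competing portfolio $\tau\in\varLambda$. The corollary then reduces to a single substitution once the right ordering between $X$ and $Y$ is recorded.

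First I would invoke Theorem~\ref{thm:efficient-portfolio-stochastically-dominant} together with Definitions~\ref{def:FSD-Portfolio-Dominance} and~\ref{def:FSD-efficient} to note that $X \leq_{SD1} Y$; that is, the return of the efficient portfolio first-order stochastically dominates that of every other admissible portfolio $\tau$. This is exactly the efficiency content of that theorem: the portfolio containing the crypto-currency with the SD$k$-optimal policy rule \ref{eq:sdk-optimal-policy} dominates all competitors, and via the first-order branch \ref{prop:FSD-dominant} this dominance can be taken in the SD1 sense.

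Next, for a fixed but arbitrary threshold $\eta\in\mathbb{R}$, I would apply the Omega theorem verbatim: from $X \leq_{SD1} Y$ it yields $\Omega_{X}(\eta)\geq\Omega_{Y}(\eta)$. Since both the competitor $\tau$ and the threshold $\eta$ are arbitrary, this establishes $\Omega_{X}(\eta)\geq\Omega_{Y}(\eta)$ for every competing portfolio and every return target, which is precisely the claim that the efficient long-$D$ portfolio carries a (weakly) higher Omega ratio for any return threshold. The strictness needed to say \emph{higher} rather than merely \emph{no lower} transfers from the strict-inequality clause attached to the first-order dominance in Definition~\ref{def:FSD-Portfolio-Dominance}.

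The step I expect to carry the weight is the first one: ensuring that the dominance delivered by Theorem~\ref{thm:efficient-portfolio-stochastically-dominant} is genuinely \emph{first-order}, because the cited Omega result is stated only for $\leq_{SD1}$. If the efficient portfolio were dominant merely in a higher-order (SD$k$, $k\geq 2$) sense, the cited theorem would not apply directly, and one would instead need to supply a higher-order Omega-dominance lemma. I would therefore anchor the argument on the first-order branch of Theorem~\ref{thm:efficient-portfolio-stochastically-dominant} via \ref{prop:FSD-dominant}; granted that, the remaining steps are a direct substitution requiring no further calculation.
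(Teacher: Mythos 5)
Your proposal is correct and takes essentially the same route as the paper: the paper gives no explicit proof, treating the corollary as an immediate composition of Theorem \ref{thm:efficient-portfolio-stochastically-dominant} with Theorem 2 of \cite{stochasticDominanceOmegaRatio}, which is exactly the substitution you carry out. Your closing caveat --- that the cited Omega-ratio result requires genuinely first-order dominance, so the argument must be anchored on the SD1 branch of Theorem \ref{thm:efficient-portfolio-stochastically-dominant} via \ref{prop:FSD-dominant} rather than a higher-order SD$k$ ordering --- is well placed and is precisely the implicit restriction the paper relies on.
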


\subsubsection{\label{subsec:Arbitrage-Opportunity}Arbitraging with Stochastic
Dominance}

If there exists a First-order Stochastic Dominance between two assets
\ref{prop:FSD-dominant}, under certain conditions, arbitrage opportunities
will also exist: thus investors will increase not only their expected
utilities, but also their wealth if they shift their holdings to the
dominant asset from the dominated one (i.e., a risk-free investment
opportunity with positive returns).
\begin{thm}
(Arbitrage versus Stochastic Dominance - \cite{FSDarbitrage}). Given
a complete market $M$, there exists an arbitrage opportunity if and
only if there exists assets $x$ and $y\in M$ such that:
\end{thm}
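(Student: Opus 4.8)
The plan is to reduce the claim to the Fundamental Theorem of Asset Pricing (FTAP) and then to track carefully how a first-order dominance relation between two traded assets interacts with the linear pricing functional that FTAP supplies. First I would record the market data: completeness of $M$ means that every payoff is attainable, so the quoted prices extend to a single linear functional $\pi$ on the whole payoff space. By FTAP, the absence of arbitrage is equivalent to the existence of a strictly positive state-price density $m>0$ (for a complete market, a unique equivalent martingale measure) with $\pi(z)=E[m\,z]$ for every payoff $z$; hence an arbitrage exists precisely when $\pi$ admits no strictly positive representation. This turns the statement into a question about how $\pi$ behaves under dominance, which is the form needed to connect it to \ref{prop:FSD-dominant}.

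For the necessity direction I would read the two assets directly off an arbitrage portfolio. An arbitrage delivers a payoff $z\geq 0$ almost surely, with $z>0$ on an event of positive probability, at cost $\pi(z)\leq 0$. Taking the dominant asset to be $y=z$ and the dominated asset to be $x\equiv 0$ --- both attainable, hence in $M$ by completeness --- one checks that $y$ first-order stochastically dominates $x$, since $F_y(t)=0=F_x(t)$ for $t<0$ while $F_y(t)\leq F_x(t)$ for $t\geq 0$ with $F_y(0)<F_x(0)=1$ giving the required strict inequality, and the dominant asset is no more expensive, $\pi(y)\leq 0=\pi(x)$. So a dominating pair necessarily exists.

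For the sufficiency direction I would build the arbitrage from the given pair by going long the dominant asset and short the dominated one: the cost is nonpositive by the price hypothesis, and if the net payoff is nonnegative state by state and strictly positive on a positive-probability event, then a nonnegative, somewhere-positive payoff at nonpositive cost is by definition an arbitrage. The delicate point --- and, I expect, the crux of the whole argument --- is that first-order stochastic dominance is a statement about \emph{distributions}, $F_{\text{dom}}\leq F_{\text{dmd}}$, and does \emph{not} by itself force a pathwise inequality between payoffs; a distributionally dominant asset can even be strictly cheaper with no arbitrage present whenever its payoff loads on low state-price states. I would therefore invoke completeness together with the comonotone (quantile) coupling to replace the dominant asset by an attainable claim $\tilde y$ having the same distribution but satisfying $\tilde y\geq x$ almost surely, and then argue that, for the stated equivalence to hold, the dominance in \cite{FSDarbitrage} must be read comonotonically (state-wise), so that the long--short payoff is genuinely nonnegative.

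The step I expect to be the main obstacle is exactly this passage from distributional dominance to an exploitable pathwise comparison, together with the single-strict-inequality bookkeeping. Concretely I would (i) realise the comonotone rearrangement $\tilde y$ as a traded claim using completeness, (ii) verify that $\{\tilde y>x\}$ has positive probability so that $E[m(\tilde y-x)]>0$ for every candidate strictly positive kernel $m$, ruling out any strictly positive representation of $\pi$, and (iii) reconcile $\pi(\tilde y)$ with the price of the original dominant asset so that the portfolio keeps nonpositive cost. Part (iii) is where the argument is genuinely subtle, because no-arbitrage prices are not law-invariant; this is the point at which I would follow the construction of \cite{FSDarbitrage} most closely, either strengthening the dominance hypothesis to a state-wise comparison or using completeness to ensure the comonotone rearrangement does not raise the price.
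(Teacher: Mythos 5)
A framing point first: the paper does not prove this theorem at all --- it is imported verbatim from \cite{FSDarbitrage}, and the only material the paper adds is the remark immediately below the statement, namely that arbitrage implies first-order stochastic dominance but that the converse holds \emph{only} when the assets' distribution functions are perfectly correlated or one payoff is a monotone function of the other. Measured against that, your necessity direction is essentially the standard argument and is fine up to bookkeeping (you silently add a price condition $\pi(y)\le\pi(x)$ that is not among the stated bullets, and you take $x\equiv 0$ as a traded asset, which completeness does justify). But your argument never engages the theorem's second bullet --- the condition phrased in terms of an investor's subjective beliefs $P_{i}$ --- and that omission is fatal for the other direction.

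The genuine gap is sufficiency, and it is not a reparable subtlety: the statement you are actually trying to prove (distributional FSD plus a weak price comparison implies arbitrage) is false. Take two equally likely states with strictly positive state prices $0.9$ and $0.1$; let $y$ pay $(1,0)$ at price $0.9$ and $x$ pay $(0,2)$ at price $0.2$. Then $x$ strictly first-order dominates $y$ in distribution and is strictly cheaper, yet strictly positive state prices preclude any arbitrage --- the long--short position pays $(-1,2)$, negative in the expensive state. Your own proposed repair exhibits the failure mode: the comonotone rearrangement $\tilde{x}=(2,0)$ does dominate $y$ state by state and is attainable by completeness, but it costs $1.8>0.9$, so step (iii) of your plan --- reconciling the price of the rearranged claim with that of the original asset --- cannot go through, precisely because arbitrage-free prices are not law-invariant and the quantile coupling shifts payoff into high state-price states. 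This is not a hole you can fill by ``following the construction of \cite{FSDarbitrage} most closely'': the converse there carries the extra hypothesis of perfect correlation, equivalently monotone functional dependence between the traded payoffs \emph{themselves}, which is the content of the second bullet and of the paper's remark that ``the inverse is not necessarily true.'' Your closing instinct --- that the dominance ``must be read comonotonically'' --- is exactly right, but it is a hypothesis of the theorem, not something derivable after the fact; with it the pathwise arbitrage is immediate and no rearrangement is needed, and without it the equivalence simply fails.
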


\begin{itemize}
\item $x\leq_{SD1}y$
\item $\left[P_{i}\left(y\leq\alpha\right)-P_{i}\left(x\leq\alpha\right)\right]\leq0$
for all $\alpha\in\mathbb{R}$ and for some investor $i\in I$, where
$P_{i}\left(\right)$is the $i$th investor's subjective probability
belief over the finite number of states of nature
\end{itemize}
In other words, arbitrage implies First-order Stochastic Dominance
but the inverse is not necessarily true: it's only true when the cumulative
distribution functions of the assets are perfectly correlated or the
risky asset is a monotone function of the asset even in the absence
of perfect correlation. Note that crypto-currency markets are unusually
highly correlated compared to other asset markets.

In practice, empirical studies may statistically detect First-order
Stochastic Dominance, but arbitrage opportunities may not exist: nonetheless,
investors can increase their expected utilities, as well as their
expected wealth, if they shift their holdings to the dominant asset
from the dominated one \cite{FSDNonarbitrage}.

\section{\label{sec:Model-Policies}Model and Policies}

DSGE models constitute the modern workhorse of monetary policy analysis,
with a recent survey finding 84 models used by 58 institutions \cite{dsgeSurvey}.
In this section, a Dynamic Stochastic General Equilibrium (DSGE) parsimonious
model is introduced to an economy featuring a Central-Bank Digital
Currency (CBDC) and a crypto-currency, calibrated and estimated for
the United States.

The model is further simplified by opting for a closed-economy instead
of a small open-economy, justified by previous results showing that
optimal policies under parameter uncertainty lack exchange rate responses
\cite{monetaryUncertaintySOE} and that welfare loss functions for
small open economies do not include foreign variables when the calibration
is imposed \cite{monetaryExchangeRateVolatilitySOE}.

\subsection{\label{subsec:Monetary-Policy-Rules}Monetary Policy Rules}

The following monetary policy rules are implemented in this model.

\subsubsection{Central Bank}

Taylor's rule for the monetary policy:
\begin{equation}
\begin{array}{c}
{{i}_{t}}=\left({\rho_{2}}-{\rho_{1}}\right)\,(\bar{{i}})+{\rho_{1}}\,{{i}_{t-1}}+{\rho_{3}}\,\left(1-\frac{{{M}_{t-1}}}{{{Y}_{t-1}}}\right)\\
+\left({\rho_{2}}-{\rho_{1}}\right)\,\left({\phi_{\pi}}\,\left({{\pi}_{t}}-{\bar{\pi}}\right)+{\phi_{y}}\,\left(log\left({{Y}_{t}}\right)-log\left({{Y}_{t-1}}\right)\right)\right)+{{\epsilon_{i}}_{t}}
\end{array}
\end{equation}
where $\rho_{i}$ are smoothing parameters and $\phi_{\pi}$ is the
inflation feedback coefficient.

\subsubsection{Bitcoin's Monetary Policy\label{subsec:Bitcoin's-Monetary-Policy}}

Although Bitcoin's monetary policy is not described in its paper,
its implementation appears in the source code \cite{bitcoinMonetaryPolicy}:
the initial reward of 50 BTC is halved every 210,000 blocks (4 years),
and each block is mined approximately every 10 minutes. The supply
formula in block time is given by,
\begin{align}
B\left(t\right) & =\sum_{t=1}^{min\left(t,T\right)}\frac{50}{2^{H\left(t\right)}}\label{eq:bitcoinSupply}\\
H\left(t\right) & =\left\lfloor \frac{t}{210000}\right\rfloor 
\end{align}
where $t$ is the block height, $H\left(t\right)$ is the number of
reward halvings up to block $t$, and $T=33x210000$. Bitcoin supply
is limited beyond block $T$, given by
\begin{equation}
B_{total}=\sum_{i=0}^{32}\frac{50}{2^{i}}\times210000\approx2.1\times10^{7}\label{eq:bitcoinMaxIssuance}
\end{equation}
Equation \ref{eq:bitcoinSupply} can be fitted as an exponential curve,
given by:
\begin{equation}
S_{t}^{BTC}=2.1\times10^{7}\times\left(1-\alpha^{t}\right)\label{eq:bitcoinSupplyPeriod}
\end{equation}
where $S_{t}$ is the supply in period $t$, and $\alpha$ is the
growth rate with $\alpha\approx0.825$ for yearly periods and $\alpha\approx0.953$
for quarterly periods. The previous exponential curve can be rewritten
in recursive form as:
\begin{align}
f_{0} & =2.1\times10^{7}\\
f_{t} & =0.825\times f_{t-1}\\
S_{t}^{BTC} & =2.1\times10^{7}-f_{t}
\end{align}
or equivalently:
\begin{align}
S_{t+1}^{BTC} & =S_{t}^{BTC}+(1-\alpha)\left(2.1\times10^{7}-S_{t}^{BTC}\right)\\
 & =\alpha\cdot S_{t}^{BTC}+\left(1-\alpha\right)\cdot2.1\times10^{7}
\end{align}

The following figure displays the evolution of bitcoin supply assuming
exact 10-minute confirmation times.

\begin{figure}[H]
\begin{centering}
\includegraphics[scale=0.2]{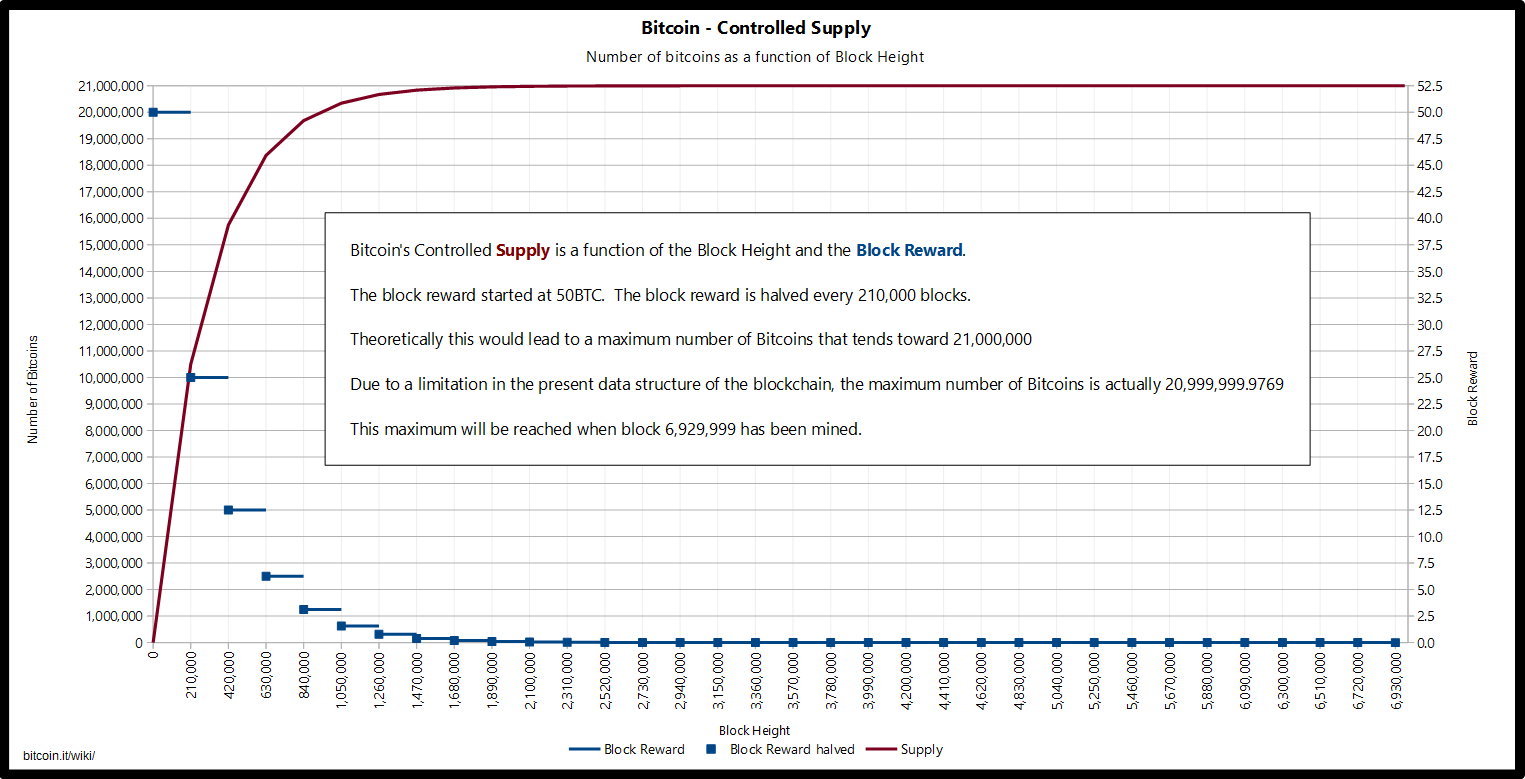}
\par\end{centering}
\caption{\label{fig:Bitcoin's-controlled-supply}Bitcoin's controlled supply\cite{bitcoinControlledSupply}}
\end{figure}

As previously pointed out in Figure \ref{fig:Relative-supply-of-cryptocurrencies}
comparing the relative supply of crypto-currencies, most crypto-currencies
follow similar supply curves but use different parameters: therefore,
without loss of generality we will only consider Bitcoin in this paper
in representation of all the other crypto-currencies.

Note that Bitcoin's monetary policy is independent of any observable
variable (e.g., inflation, output, ....) and Satoshi Nakamoto pre-committed
not to ever modify it: in macro-economics, this monetary policy can
be interpreted as a deflationary version of Friedman's $k$-percent
rule\cite{USmonetaryHistory} (i.e., constant money growth). Following
Poole's classical Keynesian analysis \cite{poole70} in a stochastic
IS-LM model, monetary policies targeting only the money stock allow
money demand shocks to contribute to macroeconomic volatility: indeed,
recent analysis in modern New Keynesian models \cite{fedSince1980,pooleNK,galiBook}
demonstrate that constant money growth rules lead to excess volatility
in both output and inflation when the economy faces money demand shocks,
or other disturbances that require output and inflation to adjust.
This situation is further aggravated by an inelastic supply curve
in both the short and the long term: as the following comparative
chart shows \ref{fig:Elastic-v-Inelastic}, supply inelasticities
imply dramatic price changes with even minor changes in demand, thus
contributing to Bitcoin's volatility.

\begin{figure}[H]
\begin{centering}
\includegraphics[scale=0.45]{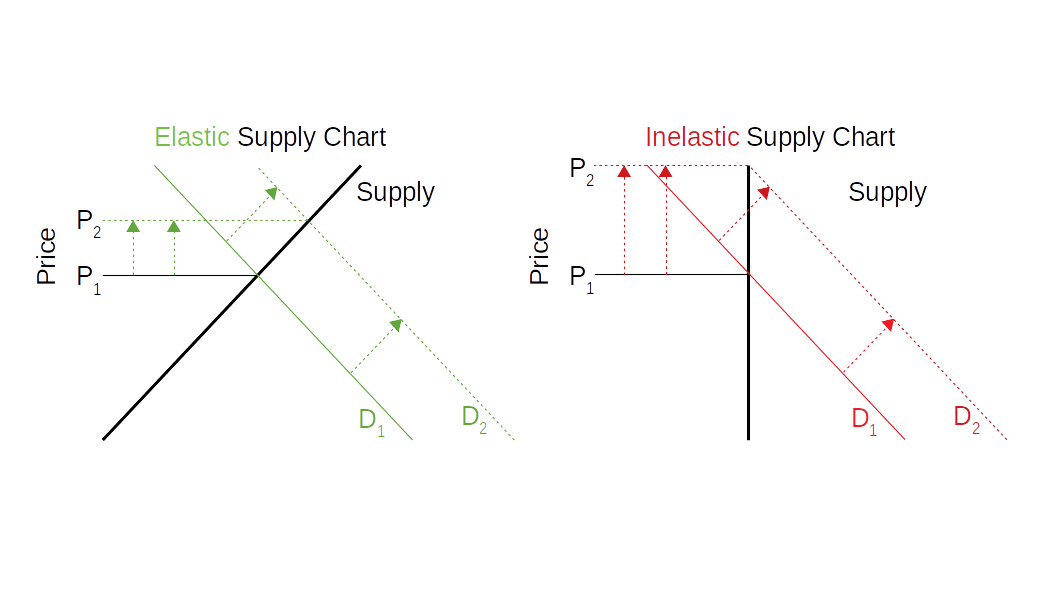}
\par\end{centering}
\caption{\label{fig:Elastic-v-Inelastic}Elastic v. Inelastic Supply Charts}

\end{figure}

However, money growth rules perform much better when they are able
to adjust to movements in the output gap and inflation as exemplified
by the two following rules \ref{eq:maccallumRule} and \ref{eq:moneyRulesReconsidered}:
advantageously, these money growth rules are able to stabilise inflation
by pre-committing to an average rate of money growth and focusing
directly on stabilising the output gap over shorter time horizons,
instead of the aggressive responses to inflation needed by interest
rate rules (i.e., Taylor's rule).

\subsubsection{\label{subsec:Ethereum-2.0's-Monetary-Policy}Ethereum 2.0's Monetary
Policy}

Other crypto-currencies feature a much more complex monetary policy
than Bitcoin's monetary policy \ref{subsec:Bitcoin's-Monetary-Policy},
although in essence they all suffer from the same shortcoming: they
fail to react to changes in inflation, output gap, or any other macro-economic
aggregate (unlike the monetary policy presented in this paper).

For the particular case of Ethereum 2.0 after transitioning to Proof-of-Stake
(a.k.a., ``the Merge''), the monetary policy will be described by
the following features:
\begin{itemize}
\item almost deflationary by default: issuance reduced from 2 Ether/block
to a variable number depending on the total amount of Ether at stake
(currently around 13.3MM ETH), which will be around 600K ETH/year,
implying a 90\% reduction
\item deflationary burning of transaction fees (EIP-1559)
\item double use as store of value and gas for smart contracts
\end{itemize}
Accounting in a monetary policy rule for all the previous features
will only make it more deflationary, thus less reactive to changes
in the macro-economic environment (i.e., a narrower path of policy
responses) and therefore much more stochastically dominated even than
Bitcoin's monetary policy \ref{subsec:Bitcoin's-Monetary-Policy}.

\subsubsection{\label{subsec:McCallum's-Policy-Rule}McCallum's Policy Rule}

A classical monetarist policy, McCallum's rule \cite{mccallumPolicyRule}
is specified by:
\begin{equation}
\Delta b_{t}=\Delta x^{*}-\frac{\left(x_{t-1}-b_{t-1}-x_{t-17}+b_{t-17}\right)}{16}+\lambda\left(x_{t-1}^{*}-x_{t-1}\right)\label{eq:maccallumRule}
\end{equation}
where the previous variables are defined as:
\begin{itemize}
\item $b_{t}$ is the logarithm of the adjusted monetary base
\item $x_{t}$ is the logarithm of the adjusted nominal GDP
\item $x_{t}^{*}$ is the target value of $x_{t}$ for quarter $t$ (growing
smoothly at the rate $\Delta x^{*}$).
\end{itemize}
The second term provides a velocity growth adjustment intended to
reflect long-lasting institutional changes, while the third term features
feedback adjustment in $\Delta b_{t}$ in response to cyclical departures
of $x_{t}$ from the target path $x_{t}^{*}$, with $\lambda\geq0$
chosen to balance the speed of eliminating $x_{t}^{*}-x_{t}$ gaps
against the danger of instrument instability.

\subsubsection{\label{subsec:Reconsideration-of-Money-Growth}A Reconsideration
of Money Growth Rules}

If the Federal Reserve would have used a money rule targeting money
growth instead of the interest rate during the 2007-2009 recession,
the US economy would have recovered more quickly, and during the 2009-2015
period of zero nominal interest rates, it would have stabilised output
and inflation with comparable performance \cite{moneyRulesReconsidered}.
While the recent consensus was that policy rules using \textit{constant}
rates of money growth would have performed poorly in comparison to
Taylor rules, recent work \cite{moneyRulesReconsidered} shows that
money growth rules augmented to adjust to movements in the output
gap and inflation in a manner similar to the Taylor rule will perform
significantly better, on par with more conventional Taylor rules for
the interest rate. Thus, the reconsidered money growth rule is given
by
\begin{equation}
\ln\left(\mu_{t}/\mu\right)=\rho_{mm}\ln\left(\mu_{t-1}/\mu\right)+\rho_{m\pi}\ln\left(\pi_{t}/\pi\right)+\rho_{mx}\ln\left(x_{t}/x\right)\label{eq:moneyRulesReconsidered}
\end{equation}
where the previous variables are defined as:
\begin{itemize}
\item $\mu_{t}=M_{t}/M_{t-1}$ denotes the growth rate of nominal money
\item $\mu$ denotes the steady-state rate of money growth
\item $\pi$ denotes the steady-state rate of inflation
\item $x$ denotes the steady-state values of the output gap
\end{itemize}
Depending on the values of the parameters, the following cases can
be considered:
\begin{itemize}
\item $\rho_{m}=\rho_{m\pi}=\rho_{mx}=0$ is the \textit{constant} money
growth rule as advocated by Friedman \cite{USmonetaryHistory}
\item $\rho_{m\pi}<0$ and $\rho_{mx}<0$ allow to stabilise inflation and
the output gap in response to shocks
\item $\rho_{m\pi}<0$, $\rho_{mx}<0$ and $\rho_{mm}>0$ prescribe a gradual
response of money growth to movements in inflation and the output
gap, much like the Taylor rule with interest rate smoothing
\end{itemize}

\subsection{\label{subsec:Ranking-Policy-Rules}Ranking of Policy Rules}

\includegraphics[scale=0.22]{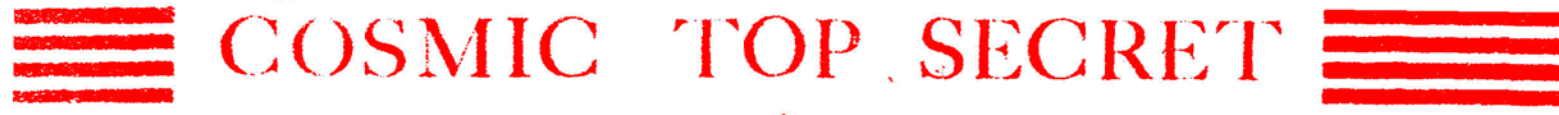}

\section{Implementation Details\label{sec:Implementation-Details}}

The calculation of the stochastically-dominant optimal monetary policy
is implemented using Dynare \cite{dynare5} with an additional 225.000
MATLAB/Octave LOCs.

\subsection{\label{subsec:Global-Implementation}Global Implementation}

Different countries feature different macro-economic indicators (inflation,
interest rate, output, GDP growth, exchange rates...), thus it is
very important for the consensus protocol to be aware of the different
nationalities of its participants (nodes and/or users): Pravuil \cite{pravuil}
is specifically designed for an international setting as it integrates
national identity cards and biometric passports in layer 1, making
it ideal to implement different monetary policies in different countries.

Furthermore, the combination of Zero-Knowledge Proof of Identity\cite{cryptoeprint:2019/546}
with the Zero-Knowledge Stochastically Dominant crypto-currency induces
the following pincer manoeuvre:

\begin{figure}[H]
\begin{centering}
\includegraphics[scale=0.4]{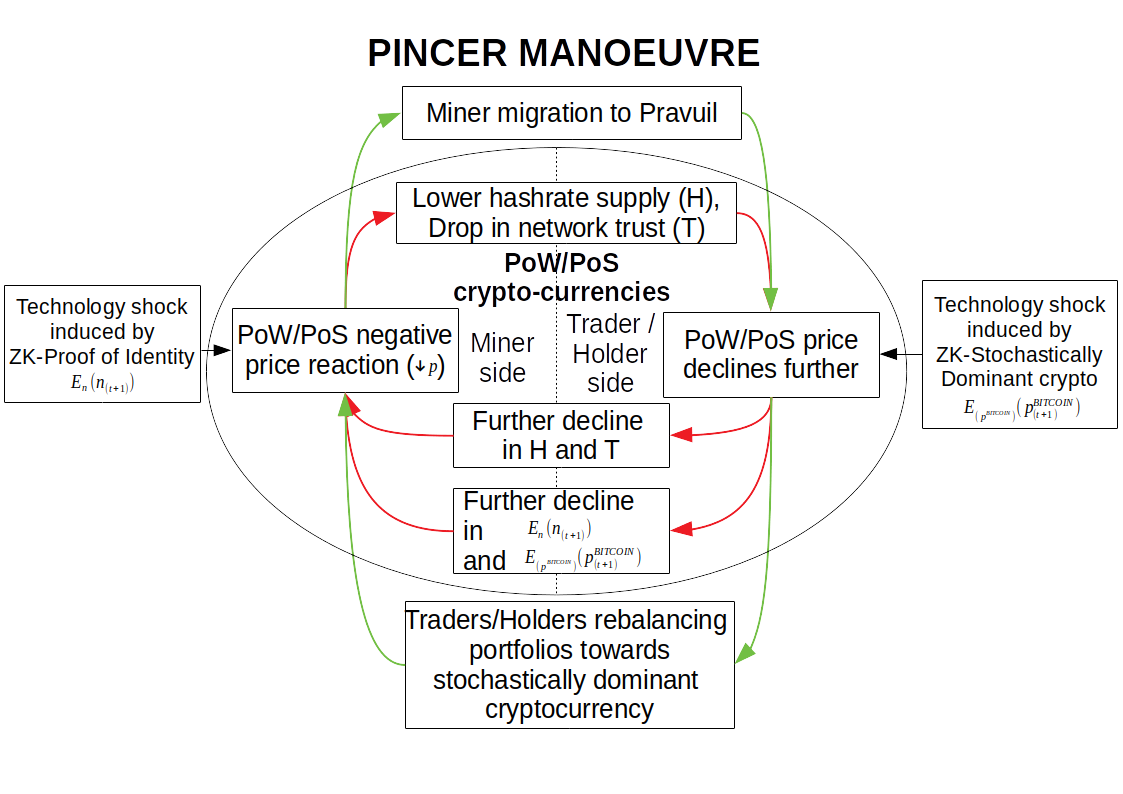}
\par\end{centering}
\caption{\label{fig:pincer-Maneuver}Pincer manoeuvre inducing a downward spiral
on PoW/PoS crypto-currencies (in \textcolor{red}{red}) and a virtuous
cycle for the Zero-Knowledge Stochastically Dominant crypto-currency
(in \textcolor{green}{green})}

\end{figure}

\subsection{\label{subsec:Zero-Knowledge-Monetary-Policy}Zero-Knowledge Monetary
Policy}

To understand the reason behind the lack of advanced monetary policies
in crypto-currencies as the ones described in this paper in subsections
\ref{subsec:Monetary-Policy-Rules}, one has to look back to a reply
by Satoshi Nakamoto \cite{satoshiNoCentralBank} on its original post
announcing the first implementation of Bitcoin:
\begin{quotation}
\textit{Indeed there is nobody to act as central bank or federal reserve
to adjust the money supply as the population of users grows. That
would have required a trusted party to determine the value, because
I don't know a way for software to know the real world value of things.
If there was some clever way, or if we wanted to trust someone to
actively manage the money supply to peg it to something, the rules
could have been programmed for that.}
\end{quotation}
Fortunately, the author of this paper is more knowledgeable: this
subsection describes a zero-knowledge protocol to securely compute
monetary policies using authenticated economic series and commit their
resulting zero-knowledge proofs on the blockchain.

\begin{figure}[H]
\centering{}\includegraphics[scale=0.45]{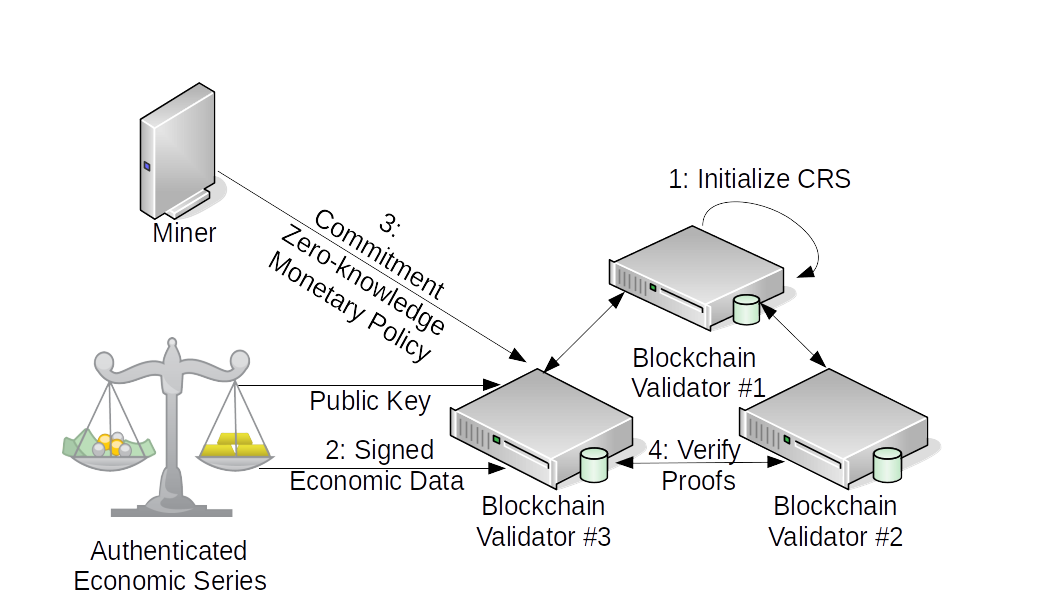}\caption{\label{fig:Committing-ZK-monetary-policies}Committing zero-knowledge
monetary policies on a blockchain}
\end{figure}
As pictured in the Figure \ref{fig:Committing-ZK-monetary-policies}
above, there are 3 parties to the protocol:
\begin{itemize}
\item \textbf{Miner}: commits transactions to the blockchain and gets rewarded
according to a monetary policy rule using data from providers of authenticated
economic series, and optionally its own private data.
\item \textbf{Providers of Authenticated Economic Series}: take economic
series from public providers (e.g., FRED, db.nomics, ...) and authenticate
their data on the blockchain by signing with their private keys $sk_{econ}$,
so it can later be verified by everyone with the public key $pk_{econ}$.
\item \textbf{Blockchain Validators}: blockchain nodes that verify transactions,
blocks, and proofs. As previously discussed \ref{subsec:Global-Implementation},
they should be running the Pravuil \cite{pravuil} consensus protocol.
\end{itemize}

\subsubsection{Security Model}

The security model is defined with an ideal functionality $\mathcal{F}{}_{zkMonetaryPolicy}$
that rigorously sets the security requirements of the zero-knowledge
protocol:
\begin{itemize}
\item \textbf{Initialisation}: the blockchain is initialised with public
input data $p$ and a computation circuit $C$ 
\item \textbf{AuthenticateEconomicData}: the provider of Authenticated Economic
Series sends a data authentication request to obtain the digital signature
$s_{econ}$ over $\left(\text{data}_{public}\right)$.
\item \textbf{zk-CommitMonetaryPolicy}: miners request with authenticated
data containing $\text{input}_{public},input_{private}$, the hash
$h$ of inputs, and the $\text{output}_{miner}$.
\end{itemize}
\noindent\fbox{\begin{minipage}[t]{1\columnwidth - 2\fboxsep - 2\fboxrule}%
\begin{center}
\textbf{\label{Ideal-Functionality zkMonetaryPolicy}Ideal Functionality}
$\mathcal{F}{}_{zkMonetaryPolicy}$
\par\end{center}
\begin{flushleft}
$\mathcal{F}_{zkMonetaryPolicy}$ interacts with the adversary $\mathcal{A}$,
the miner, the providers of authenticated economic series, the ideal
functionality $\mathcal{F}{}_{sig}$ and the ideal blockchain ledger
functionality $\mathcal{L}$ with the following queries:
\par\end{flushleft}
\begin{itemize}
\item \textbf{Initialisation}: upon receiving $\left(init,C,p\right)$ on
initialisation:
\begin{itemize}
\item store the circuit $C$ and the public input data $p$
\item send $\left(init,C,p\right)$ to $\mathcal{A}$
\end{itemize}
\item \textbf{AuthenticateEconomicData}: upon receiving $\left(authenticate,\text{data}_{public}\right)$
from a provider of authenticated economic series:
\begin{itemize}
\item send $\left(sign,provider,\text{data}_{public}\right)$ to $\mathcal{F}{}_{sig}$
and receives signature $s_{econ}$
\item send $\left(sign,provider,\text{data}_{public}\right)$ to $\mathcal{A}$
\end{itemize}
\item \textbf{Validate}: upon receiving $\left(validate,\text{output}_{miner},\text{input}_{public},\text{input}_{private},h,s_{econ}\right)$
from a miner:
\begin{itemize}
\item send $\left(verify,provider,h,s_{econ}\right)$ to $\mathcal{F}{}_{sig}$
and check that it's correct
\item check that $\left(p,\text{output}_{miner},\text{input}_{public},\,\text{input}_{private},h\right)$
satisfies the circuit $C$
\item send $\left(validate,\text{output}_{miner},\text{input}_{public},h,s_{econ}\right)$
to $\mathcal{A}$
\end{itemize}
\end{itemize}
\end{minipage}}

The ideal functionality $\mathcal{F}_{zkMonetaryPolicy}$ captures
the following design goals:
\begin{itemize}
\item \textbf{Authenticity}: blockchain validators execute only on resulted
computations from providers of authenticated economic series, rejecting
otherwise.
\item \textbf{Privacy}: the private data of the miner is never exposed to
anyone, and the blockchain validators are executed correctly without
the private data using the zero-knowledge proof.
\end{itemize}

\subsubsection{Protocol Description and Implementation}

Using a zero-knowledge SNARK scheme $\varLambda$, the steps of the
proposed scheme would be as follows:
\begin{itemize}
\item \textbf{Initialisation}: A security parameter $1^{\lambda}$ is picked
in accordance with the security requirements, and a circuit $C$ is
constructed for the computation over the authenticated data. Then,
a trusted generator or a MPC protocol setups the zk-SNARK with $\left(1^{\lambda},C\right)$
to create the Common Reference String for proof generation and verification.\\
Concurrently, the provider of authenticated economic series chooses
a public/private key pair $\left(pk_{econ},sk_{econ}\right)$. Only
then, $\left(CRS,pk_{econ}\right)$ are published on the blockchain
for everyone to check their validity.
\item \textbf{AuthenticateEconomicData}: providers of authenticated economic
series obtain signatures $s_{econ}$ with parameters $\left(sk_{econ},\left(h,\text{data}_{public}\right)\right)$.
\item \textbf{zk-CommitMonetaryPolicy}: miner uses circuit $C$ of the monetary
policy to obtain the result $\text{output}_{miner}$ and a hash $h$
of $\left(\text{input}_{public},\text{input}_{private}\right)$; then,
the miner executes the zk-SNARK for proving with parameters $\left(CRS,p,\text{input}_{public},\text{input}_{private},\text{output}_{miner},h\right)$
obtaining the zero-knowledge proof $\pi$. Then, the miner sends a
transaction to the blockchain validators as follows:
\[
\text{tx}_{sk_{miner}}=\left(validate,\pi,\text{input}_{public},\text{output}_{miner},h\right)
\]
\item \textbf{Validation}: blockchain validators verify the zk-SNARK with
parameters $\left(CRS,pk_{econ},\pi,p,\text{input}_{public},\text{output}_{miner},h\right)$:
only in case it's found valid, then the block from the miner is accepted
with the computed monetary policy.
\end{itemize}
\noindent\fbox{\begin{minipage}[t]{1\columnwidth - 2\fboxsep - 2\fboxrule}%
\begin{center}
\textbf{\label{zkMonetaryPolicy-Protocol}$zk-MonetaryPolicy$ Protocol}
\par\end{center}
\textbf{Miner}:
\begin{itemize}
\item \textbf{zk-CommitMonetaryPolicy}: on input $\left(commit,p,\text{input}_{public},\text{input}_{private},\text{output}_{miner},h\right)$
\begin{itemize}
\item prove with zk-SNARK: 
\[
\pi=Prove\left(CRS,p,\text{input}_{public},\text{input}_{private},\text{output}_{miner},h\right)
\]
\item send $\text{tx}_{sk_{miner}}=\left(validate,\pi,\text{input}_{public},\text{output}_{miner},h\right)$
to the blockchain validator
\end{itemize}
\end{itemize}
\textbf{Providers of Authenticated Economic Series}:
\begin{itemize}
\item \textbf{Initialisation}:
\begin{itemize}
\item $\left(pk_{econ},sk_{econ}\right)=KeyGeneration\left(1^{\lambda}\right)$
\end{itemize}
\item \textbf{Commit Authenticated Economic Data}:
\begin{itemize}
\item compute $h=Hash(data_{public})$ and $s_{econ}=Sign\left(sk_{econ},\left(h,data_{public}\right)\right)$
\item send $\left(h,s_{econ}\right)$ to the blockchain
\end{itemize}
\end{itemize}
\textbf{Blockchain Validators:}
\begin{itemize}
\item \textbf{Initialisation}: upon receiving $\left(init,C,p,CRS,pk_{econ}\right)$
\begin{itemize}
\item Store the public input data $p$ for $C$
\item Store the common reference string CRS and $pk_{econ}$
\end{itemize}
\item \textbf{Validation}: upon receiving $\left(validate,\pi,\text{input}_{public},\text{output}_{miner},h\right)$
\begin{itemize}
\item Check that $h$ is stored on the blockchain
\item Check that zk-SNARK $\left(CRS,pk_{econ},\pi,p,\text{input}_{public},\text{output}_{miner},h\right)$
is valid
\item If valid, proceed to store the transactions, block, and associated
zk-proof $\pi$
\end{itemize}
\end{itemize}
\end{minipage}}

The following theorem formalises the security and privacy of the above
scheme:
\begin{thm}
\label{thm:securityProof}If $\varLambda$ is a simulation-extractable
zk-SNARK with data authentication scheme, then the above scheme is
a privacy-preserving scheme under the universally composable framework.
\end{thm}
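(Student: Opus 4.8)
The plan is to establish security in the Universal Composability (UC) framework by exhibiting a probabilistic polynomial-time simulator $\mathcal{S}$ such that, for every PPT environment $\mathcal{Z}$ and every real-world adversary $\mathcal{A}$, the real-world execution of the $zk\text{-}MonetaryPolicy$ protocol is computationally indistinguishable from the ideal-world execution in which $\mathcal{S}$ interacts with the ideal functionality $\mathcal{F}_{zkMonetaryPolicy}$, the signature functionality $\mathcal{F}_{sig}$, and the ledger $\mathcal{L}$. First I would have $\mathcal{S}$ run the zk-SNARK setup in its simulation mode, so that the published Common Reference String is produced together with a simulation trapdoor and an extraction trapdoor; this is exactly what a simulation-extractable $\varLambda$ supplies. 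The simulator then services the two interface families separately: for every honest miner it produces the proof $\pi$ by running the zero-knowledge simulator on the public statement $\left(p,\text{input}_{public},\text{output}_{miner},h\right)$ alone, never touching $\text{input}_{private}$ (this discharges the \textbf{Privacy} goal), and it relays provider authentications verbatim to $\mathcal{F}_{sig}$ exactly as the ideal functionality prescribes.

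For every proof produced by a corrupt miner, $\mathcal{S}$ invokes the extractor guaranteed by simulation-extractability to recover a witness from the accepting $\pi$, and only then issues the corresponding $validate$ query to $\mathcal{F}_{zkMonetaryPolicy}$; because the extracted witness satisfies the circuit $C$, this faithfully reproduces the \textbf{Authenticity} check performed by the functionality. Indistinguishability would then be argued through a short sequence of hybrids: $H_0$ is the real execution; $H_1$ swaps the honestly generated CRS for the simulated one (indistinguishable by the setup-indistinguishability of $\varLambda$); $H_2$ replaces all honest proofs by simulated proofs (indistinguishable by the zero-knowledge property); $H_3$ aborts if any adversarial proof verifies but extraction fails (a negligible event precisely by simulation-extractability, even though $\mathcal{A}$ has by then observed many simulated proofs); and $H_4$ routes signature checks through $\mathcal{F}_{sig}$, so that an adversarial transaction carrying a valid-looking $s_{econ}$ on unauthenticated data would contradict the unforgeability embodied in $\mathcal{F}_{sig}$. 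The final hybrid is by construction the ideal execution, closing the chain.

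The main obstacle is the interaction between simulation and extraction in hybrid $H_3$: plain knowledge soundness does not suffice, because a malleability attack could let $\mathcal{A}$ transform a simulated proof it has already seen into a fresh accepting proof on a related statement without ever knowing a witness, which would silently break the $validate$ forwarding and hence the \textbf{Authenticity} guarantee. It is exactly to defeat this that the hypothesis requires $\varLambda$ to be \emph{simulation-extractable} rather than merely knowledge-sound: the extractor is promised to succeed with all but negligible probability even when run against an adversary with oracle access to the simulator. A secondary, more bookkeeping-level difficulty is keeping the correspondence between protocol messages and ideal-functionality queries consistent under the corruption model; the argument above is written for static corruptions, whereas an adaptive variant would additionally require $\mathcal{S}$ to equivocate openings of $h$, which the commitment implicit in $Hash$ must support. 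Granting these, the UC composition theorem lifts the single-session argument to arbitrary concurrent executions, yielding the claimed privacy-preserving realization.
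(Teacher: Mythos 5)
Your proposal is correct and follows essentially the same route as the paper's own proof: a UC simulator that runs the simulated setup to obtain a trapdoored CRS, simulates honest miners' proofs without $\text{input}_{private}$, extracts witnesses from corrupted miners' accepting proofs, and concludes via a hybrid argument whose steps (simulated CRS, simulated proofs, extraction-failure abort bounded by simulation-extractability, idealised signatures via $\mathcal{F}_{sig}$) match the paper's $\mathcal{H}_{1}$--$\mathcal{H}_{3}$ chain up to bookkeeping of which steps are merged. Your added remarks on why plain knowledge soundness would fail against proof malleability and on the static-corruption caveat are refinements the paper leaves implicit, but they do not change the structure of the argument.
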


\begin{proof}
See \ref{proof:securityProof}.
\end{proof}
\begin{cor}
In the implementation, a simulation-extractable zk-SNARK such as Plonk
must be used \cite{cryptoeprint:2021/511}, even if it has larger
proofs than other more succinct zk-SNARKs.
\end{cor}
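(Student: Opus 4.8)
The plan is to read the corollary as a \emph{necessity} claim that complements the \emph{sufficiency} direction of Theorem~\ref{thm:securityProof}. The theorem shows simulation-extractability suffices for UC realisation of $\mathcal{F}_{zkMonetaryPolicy}$; the word ``must'' in the corollary, together with the remark that larger proofs are tolerated, signals that no strictly weaker extraction notion (in particular, plain knowledge-soundness as offered by the most succinct schemes) can be substituted. At the trivial level the corollary is immediate, since the implementation inherits the security claim only by instantiating $\Lambda$ with a scheme satisfying the theorem's hypothesis. The substantive content, and what I would actually prove, is a \emph{separation}: I would first extract from the UC proof exactly which property of $\Lambda$ is load-bearing, and then exhibit an attack showing that dropping simulation-extractability breaks the realisation.

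First I would isolate the critical step of the simulator. Because an honest miner's $\text{input}_{private}$ is hidden from the simulator, the simulator must fabricate a proof $\pi$ for that miner's statement \emph{without} a witness, while simultaneously \emph{extracting} a witness from every accepting proof submitted by a corrupted miner in order to issue the matching \textbf{Validate} query to $\mathcal{F}_{zkMonetaryPolicy}$. The ordering of events is essential: the environment observes the honest parties' simulated proofs \emph{before} a corrupted miner emits its transaction $\text{tx}_{sk_{miner}}$. Hence the required guarantee is that extraction succeeds \emph{even after the adversary has seen simulated proofs}, which is precisely the defining feature of simulation-extractability; ordinary knowledge-soundness supplies an extractor only in the standalone, no-oracle setting and gives no such guarantee.

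Next I would exhibit the separating attack. Suppose $\Lambda$ is knowledge-sound but \emph{malleable}, the canonical example being a Groth16-style SNARK for which re-randomisation of an existing proof yields a fresh accepting proof. An adversary corrupting a miner waits for the simulator to post a simulated proof $\pi$ for an honest statement, then mauls $\pi$ into a new accepting proof $\pi'$ for a statement it never legitimately generated, and submits $\pi'$ in a \textbf{Validate} transaction. In the real world the validators accept this transaction, yet in the ideal world the simulator cannot make the corresponding \textbf{Validate} query, since no witness was ever produced and the knowledge extractor is guaranteed only on proofs generated \emph{independently} of any simulated proof. This yields a distinguishing advantage for the environment, contradicting UC realisation and violating the \textbf{Authenticity} goal (a policy output is validated that was never committed with a real witness). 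Therefore any $\Lambda$ used in the implementation must be simulation-extractable, and since Plonk admits a simulation-extractable variant \cite{cryptoeprint:2021/511} whereas more succinct schemes are malleable, Plonk (or any scheme with the same non-malleability guarantee) must be chosen, the larger proof size being the unavoidable price of the property Theorem~\ref{thm:securityProof} demands.

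The main obstacle I anticipate is the necessity direction itself: making the mauling attack \emph{generic} rather than tied to one concrete SNARK, i.e. arguing that \emph{every} merely knowledge-sound scheme admits some adversarial proof-dependence that defeats the UC simulator. A clean way to handle this is to phrase the obstruction structurally rather than by ad hoc mauling: knowledge-soundness provides an extractor only in a setting where the adversary's input is independent of simulated transcripts, so any hypothetical UC reduction would have to strip the simulated proofs from the adversary's view, which the UC framework forbids (no rewinding of the environment). Framing the separation this way avoids committing to a specific malleability structure and turns the ``must'' of the corollary into a genuine impossibility for the knowledge-sound-only case.
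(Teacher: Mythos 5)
The paper offers no standalone proof of this corollary: it is read off directly from Theorem~\ref{thm:securityProof}, whose hypothesis demands a simulation-extractable $\varLambda$, and whose appendix proof uses that property in exactly the two places you isolate --- the simulator extracting $\text{input}_{private}$ from corrupted miners' proofs via the trapdoor $\tau$, and the hybrid $\mathcal{H}_{3}$, where the negligible abort probability of extraction-after-seeing-simulated-proofs is justified by simulation extractability. So at the level the paper intends, your ``trivial level'' observation already reproduces the paper's entire argument, and your identification of the load-bearing step (extraction must succeed on adversarial proofs produced \emph{after} the environment has observed proofs simulated with $\widehat{CRS}$ and $\tau$, which plain knowledge-soundness does not guarantee) is precisely the right reading of why $\mathcal{H}_{3}$ needs the stronger notion. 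Your proposal then goes genuinely beyond the paper by attempting the converse: a separation showing that knowledge-soundness alone cannot support the UC realisation. That is a legitimate and more informative route --- it turns the corollary's ``must'' from engineering guidance into a provable necessity --- and the mauling attack against a re-randomisable scheme is the standard witness for it.

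Two cautions on the necessity direction, since that part is yours rather than the paper's. First, your Groth16-style mauling re-randomises a proof of the \emph{same} statement; it does not produce a proof ``for a statement it never legitimately generated.'' The attack still distinguishes here, but for a different reason: the corrupted miner replays the honest statement under its own $\text{tx}_{sk_{miner}}$, and the simulator, required to extract a witness from \emph{every} accepting adversarial proof to issue the matching \textbf{Validate} query to $\mathcal{F}_{zkMonetaryPolicy}$, has none to extract --- you should state the attack this way, and note it presupposes the circuit $C$ does not bind the miner's identity into the proved statement (if it did, same-statement malleability would be harmless and the separation would need a different scheme). Second, your closing ``genuine impossibility for the knowledge-sound-only case'' overclaims: knowledge-soundness does not \emph{preclude} simulation extractability, so no argument can show that every knowledge-sound scheme fails --- some happen to be simulation-extractable without being advertised as such. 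The correct formulation is that knowledge-soundness is an insufficient \emph{hypothesis}: there exists a knowledge-sound, malleable scheme for which the realisation fails, so the theorem's conclusion cannot be derived from knowledge-soundness alone. Your structural remark about straight-line extraction (no rewinding of the environment in UC) supports exactly this weaker, correct claim, not the universal one.
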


An implementation in Go using gnark\cite{gnark-v0.6.4} is available
at \url{https://github.com/Calctopia-OpenSource/cothority/tree/zkmonpolicy}

\section{\label{sec:Conclusion}Conclusion}

The present paper has tackled and successfully solved the problem
of optimal monetary policies specifically tailored for crypto-currencies,
stochastically dominating all the other previous crypto-currencies.
Furthermore, the efficient portfolio is to hold the stochastically
dominant crypto-currency implementing the optimal monetary policy,
a strategy-proof arbitrage featuring a higher Omega ratio with a higher
expected return, inducing a Nash equilibrium over the crypto-currency
market.

\bibliographystyle{alpha}
\bibliography{bib}

\pagebreak{}

\part{Appendix}
\begin{proof}
\textbf{\label{proof:securityProof}(Theorem \ref{thm:securityProof}
)}. The protocol \ref{zkMonetaryPolicy-Protocol} securely realises
the ideal functionality $\mathcal{F}_{zkMonetaryPolicy}$ \ref{Ideal-Functionality zkMonetaryPolicy}:
by using the universal composability framework, we first show an ideal-world
simulator for the dummy adversary $\mathcal{A}$ automatically passing
messages to and from the actual adversary, the environment $\mathcal{E}$;
then, we show the indistinguishability of the ideal and the $\mathcal{F}_{sig}$-Hybrid
worlds.

\textbf{\textit{Ideal-world simulator}}. For conciseness, we only
focus on the simulator $\mathcal{S}$ and not on the blockchain functionality.

- \textbf{Initialisation}: simulator $\mathcal{S}$ obtains $\widehat{CRS}$
and a trapdoor $\tau$ by running a simulated setup algorithm of the
zk-SNARK scheme $\Lambda$. Then, simulator $\mathcal{S}$ keeps $\tau$
and sends $\widehat{CRS}$ to $\mathcal{E}$.

- \textbf{Simulating honest parties} (note that only \textbf{zk-CommitMonetaryPolicy}
needs to be simulated): $\mathcal{E}$ sends $\left(validate,\text{output}_{miner},\text{input}_{public},\text{input}_{private},h\right)$
to an honest miner and simulator $\mathcal{S}$ receives $\left(validate,\text{output}_{miner},\text{input}_{public},h\right)$
from the ideal functionality $\mathcal{F}_{zkMonetaryPolicy}$; then,
simulator $S$ generates an indistinguishable proof $\pi$ using trapdoor
$\tau$ (i.e., without knowing $\text{input}_{private}$). Finally,
$\mathcal{S}$ sends $\left(validate,\pi,\text{input}_{public},\text{output}_{miner},h\right)$
to the blockchain validators.

- \textbf{Simulating corrupted parties}: $\mathcal{E}$ requests to
the simulator $\mathcal{S}$ on behalf of corrupted parties; then
$\mathcal{S}$ processes as follows: $\mathcal{S}$ receives $(validate,$
$\text{output}_{miner},$ $\text{input}_{public},$ $\text{input}_{private},$
$h)$ and extracts $\text{input}_{private}$ from the proof $\pi$
using the trapdoor $\tau$, then sends $\left(validate,\pi,\text{input}_{public},\text{output}_{miner},h\right)$
to $\mathcal{F}{}_{zkMonetaryPolicy}$.

\textbf{\textit{Indistinguishability between the ideal and the $\mathcal{F}_{sig}$-Hybrid
worlds}}: a series of games from the $\mathcal{F}_{sig}$-Hybrid protocol
execution until the ideal world.

- \textbf{$\mathcal{F}_{sig}$-Hybrid model}: a dummy adversary passes
messages for the environment $\mathcal{E}$, the actual adversary.

- \textbf{Hybrid $\mathcal{H}_{1}$}: adds to the $\mathcal{F}_{sig}$-Hybrid
world calls to the simulated setup that generates $\tau$ (kept by
the simulator) and $\widehat{CRS}$, sent to $\mathcal{E}$. $\mathcal{H}_{1}$
replaces the real proofs with the simulated proofs using $\widehat{CRS}$
and $\tau$: due to the computational zero-knowledge property, $\mathcal{H}_{1}$
is computationally indistinguishable from the $\mathcal{F}_{sig}$-Hybrid
world.

- \textbf{Hybrid $\mathcal{H}_{2}$}: adds the simulation of the blockchain
to the $\mathcal{H}_{1}$ world. From the adversary $\mathcal{E}$'s
point of view, $\mathcal{H}_{2}$ is indistinguishable from $\mathcal{H}_{1}$
because the blockchain functionality is public.

- \textbf{Hybrid $\mathcal{H}_{3}$}: adds to the $\mathcal{H}_{2}$
world, the extraction of the private witness from a zero-knowledge
proof $\pi$ by $\mathcal{S}$ if it is a valid proof, otherwise aborts.
$\mathcal{H}_{3}$ is indistinguishable from $\mathcal{H}_{2}$ because
the abort probability is negligible due to the simulation extractability
property of the zk-SNARK.

Finally, the ideal and the $\mathcal{F}_{sig}$-Hybrid world are computationally
indistinguishable because $\mathcal{H}_{3}$ is computationally indistinguishable
for $\mathcal{E}$ from the ideal simulation. Note that any universal-composable
signature scheme can implement $\mathcal{F}_{sig}$ due to the universal
composition theorem.
\end{proof}

\end{document}